\declaretheorem{theorem}
\declaretheorem[sibling=theorem]{lemma}
\declaretheorem[sibling=theorem]{proposition}
\author{%
  Christof K\"ulske\footnote{
    Ruhr-Universit\"at Bochum, Fakult\"at f\"ur Mathematik,
    Universit\"atsstra\ss e 150, 44780 Bochum, Germany.  E-mail:
    \texttt{daniel.meissner-i4k@ruhr-uni-bochum.de},
    \texttt{christof.kuelske@ruhr-uni-bochum.de}
  }
  \and
  Daniel Mei\ss ner\footnotemark[\value{footnote}]
}
\title{ Stable and metastable phases for the Curie-Weiss-Potts model
  in vector-valued fields via singularity theory }
\date{\today}
\begin{document}

\maketitle

\begin{abstract}
  We study the metastable minima of the Curie-Weiss Potts model with
  three states, as a function of the inverse temperature, and for
  arbitrary vector-valued external fields. Extending the classic work
  of \textcite{elliswang90} and \textcite{wang94} we use singularity
  theory to provide the global structure of metastable (or local)
  minima. In particular, we show that the free energy has up to four
  local minimizers (some of which may at the same time be global) and
  describe the bifurcation geometry of their transitions under
  variation of the parameters.
\end{abstract}

\paragraph{Keywords:} Potts model, metastable minima, singularity
theory, singularity set, bifurcation set, butterfly, elliptic umbilic

\section{Introduction}

\subsection{Research context}

The Potts model \parencite{wu82_potts_model}, and in particular its
Curie-Weiss version, is next to the Curie-Weiss Ising model, one of
the most studied models in statistical mechanics.  While basic aspects
of Curie-Weiss models can be discovered by ad-hoc computations, they
provide ongoing challenges for refined problems involving dynamics,
metastability, complex parameters, fine asymptotics \parencite[see for
example][]{Cerf_2016, fernandez12, chatterjee11, gheissari18,
  shamis17, Eichelsbacher_2015, landim16_metas_non_rever_mean_field}.
In particular, motivated by metastability one aims at a full
understanding of the free energy landscape \parencite{Olivieri_2005,
  Bovier_2015}.  The phase-diagram for the stable states of the
Curie-Weiss Potts model, that is the behaviour of \emph{global}
minimizers, is known and described by the Ellis-Wang theorem
\parencite{elliswang90} in zero external field.  \Textcite{wang94}
provides some results in non-zero external field relating to global
minimizers.

For the \emph{metastable} states, that is, the local minima and their
transitions a complete exposition of the global analysis in the whole
parameter space, revealing the full structure of their transitions is
lacking in the literature.  See
\parencite{gaite98,gaite99,gaite_phys_rev_A} for a partial analysis
based on polynomial equivalence given for some regions in parameter
space.

One is not just interested in the static behaviour of the model, but
also in the behaviour under stochastic dynamics, in and out of
equilibrium, where we see next to \emph{metastability} also the
phenomenon of dynamical \emph{Gibbs\,--\,non-Gibbs transitions}
\parencite[see][]{van_Enter_2010_Kuelske_Opoku_Ruszel, Kissel_2020,
  EnFeHoRe, van_Enter_Ermolaev_Iacobelli_Kuelske_2012, Jahnel_2017}.
Employing the probabilistic notion of \emph{sequential Gibbs
  property}, such dynamical Gibbs\,--\,non-Gibbs transitions have been
shown to occur and analyzed for a number of exchangeable models in
\cite{kln2007,kissel18,KRvZ}. A way to understand these transitions
for independent dynamics is to examine the structure of stationary
points of a \emph{conditional rate function} which generalizes the
equilibrium rate function but contains more information.  This
conditional rate function depends on all model parameters of the
static model, but has an additional time-parameter, and a
measure-valued parameter with the meaning of an empirical
distribution. We are planning to investigate the time-evolved
Curie-Weiss Potts model in the spirit of
\cite{kln2007,fernandez12,hollander15}. For this the analysis of the
static problem for \emph{general parameter values (which means for
  general vector-valued fields)} is a necessary first step.

As a guiding principle for such analysis, singularity theory is very
useful to understand the organization of stationary points for varying
parameters.  It allows to understand and discover the types of local
bifurcations which are present in the applied problem as (by Thom's
theorem) they must be related via local diffeomorphisms to (partial)
unfoldings of elementary singularities (so-called catastrophes).
These catastrophes form prototypical types which in themselves can be
most easily understood in polynomial models.  Clear expositions
describing their geometries are found in textbooks
\parencite{poston78,arnold85,lu_intro_sing}.  The simplest example is
the rate function of the Curie-Weiss Ising model.  It is even globally
(for all values of inverse temperature and magnetic field, and all
magnetizations) identical to one cusp singularity.  As we will see,
for the Potts model in a general field various triple points (where
three local minima have merged) occur, and singularity theory becomes
very useful if we want to understand the \emph{global picture of all
  the transitions the metastable minima undergo} for general values of
inverse temperature and fields which allow to lift all degeneracies.

The paper is organized as follows.  As our main result we describe the
geometry of the bifurcation set in the parameter space of inverse
temperature and vector-valued fields (see
Section~\ref{sec:extended-diagram}).  This decomposition of parameter
space given by the bifurcation set provides a phase diagram describing
also the metastable minima.  We will show how some of the so-called
elementary singularities (threefold symmetric butterflies and an
elliptic umbilic surrounded by three folds) and their partial
unfoldings beautifully interact and are glued together.  We also
describe the accompanying geometry of the free energy landscape for
each connected component of the complement of the bifurcation set.
Finally we complement the study with the description of the
coexistence sets of lowest minima (Maxwell-sets) at which two, three,
or four minima coexist (see Section~\ref{sec:classical-diagram}).

\subsection{Model}

We consider the mean-field Potts model with three states and are
interested in both the stable and metastable phases.  Note that we
sometimes use the term local minima to include both the global and
local minima depending on the context.  The space of configurations in
finite-volume \(n \ge 2\) is defined as \(\Omega_n = \{1, 2,
3\}^n\). We define
\begin{equation}
  \Delta^2 = \{\nu \in \mathbb{R}^3 \:|\: \nu_i \ge 0, \sum_{i=1}^3 \nu_i = 1\}
\end{equation}
and often refer to it as the \emph{unit simplex}.  The Hamiltonian of
this model is given by
\begin{equation}
  H_n(\sigma) = -\frac{1}{2n} \sum_{i, j=1}^n \delta_{\sigma_i, \sigma_j}
\end{equation}
where \(\sigma\) lies in \(\Omega_n\).  The vector-valued fields are
equivalently described by the a-priori measure \(\alpha\) in the unit
simplex \(\Delta^2\).  The finite-volume Gibbs distributions are
therefore given by
\begin{equation}
  \mu_{n, \beta, \alpha}(\sigma) = \frac{1}{Z_n(\beta, \alpha)}
  e^{-\beta H_n(\sigma)} \prod_{i=1}^n \alpha(\sigma_i)
\end{equation}
where the partition function is defined as
\begin{equation}
  Z_n(\beta, \alpha) =
  \sum_{\sigma\in\Omega_n}  e^{-\beta H_n(\sigma)} \prod_{i=1}^n \alpha(\sigma_i).
\end{equation}
So the associated free energy in terms of the empirical spin
distribution~\(\nu\) is given by
\begin{equation}
  f_{\beta,\alpha}(\nu) =
  -\frac{1}{2}\beta \langle \nu, \nu \rangle + \sum_{i=1}^3 \nu_i \log \frac{\nu_i}{\alpha_i}.
\end{equation}
This is a real-valued function on the unit simplex \(\Delta^2\) with
two parameters: the inverse temperature \(\beta\) and the external
fields modeled by the a-priori measure \(\alpha\).  Thus, the
parameter space is the product \((0, \infty) \times \Delta^2\).  By a
\emph{phase} (stable or metastable) we mean a (global or local)
minimizer of the free energy \(f_{\beta,\alpha}\).

\begin{figure}
  \centering
  \includegraphics{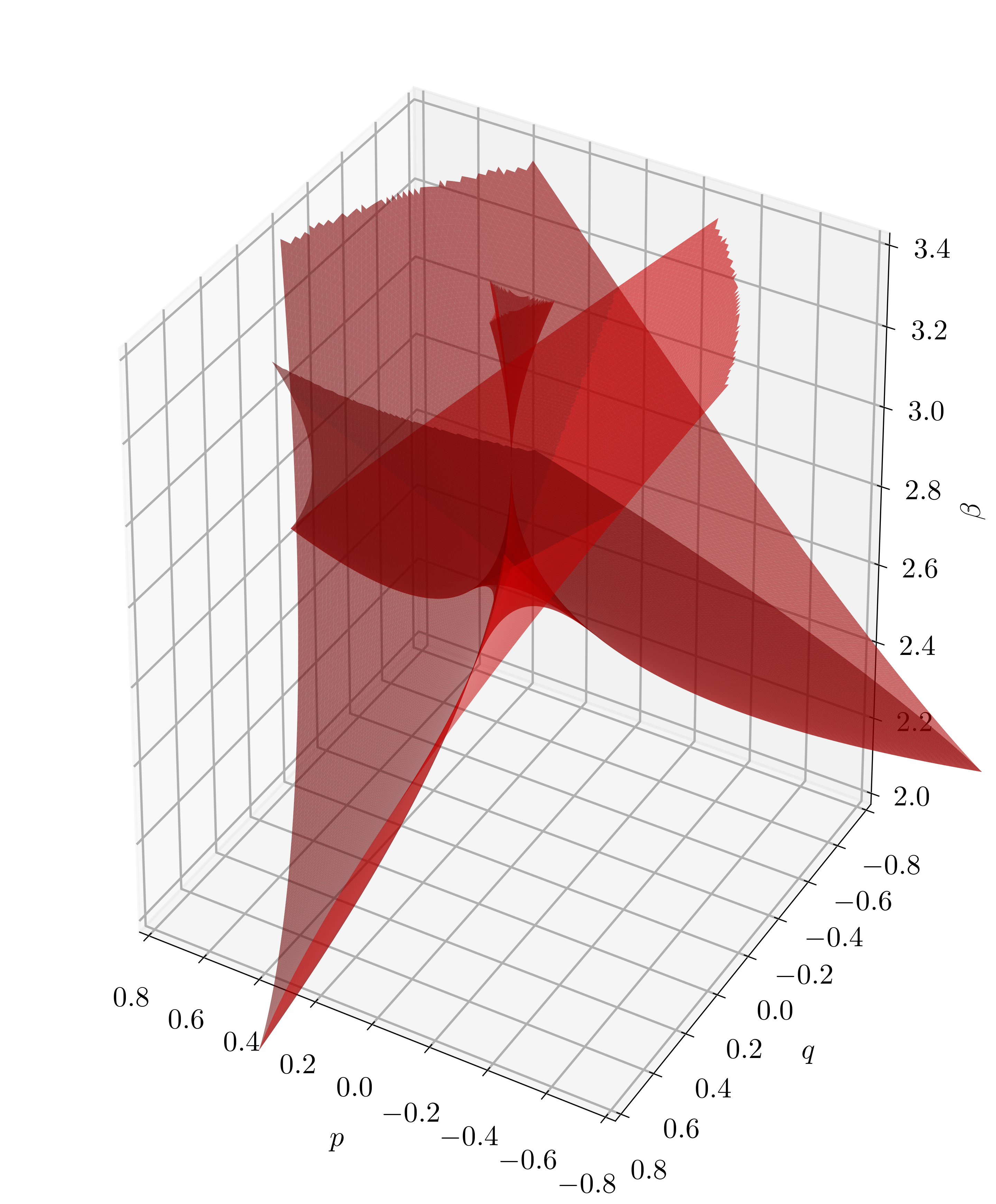}
  \caption{ This plot shows the bifurcation set which is the basis for
    the metastable phase diagram.  The surface shows pinches and
    self-intersections.  Inside of the connected components of the
    complement the structure of metastable phases does not change.
    The unit simplex of a-priori measures \(\alpha\) is embedded in
    the horizontal plane using the \((p, q)\)-coordinates defined in
    \eqref{eq:sym-fields-coords}.}
  \label{fig:3d-bifu-set}
\end{figure}

\section{The metastable phase diagram}
\label{sec:extended-diagram}

In contrast to the usual question of phase-coexistence which is
answered by the stable phase diagram (see
Section~\ref{sec:classical-diagram}), the metastable phase diagram
contains information about the metastable phases of the system but not
of their relative depth.  Mathematically speaking, the
\emph{metastable phase diagram} is a partition of the parameter space
whose cells contain parameter values \((\beta, \alpha)\) such that
\(f_{\beta, \alpha}\) has the same number of local minima.  Using
singularity theory we find that the metastable phase diagram is given
by the connected complements of the surface shown in Figure
\ref{fig:3d-bifu-set}.  The structure of this union is particularly
interesting because it shows features of two well-known catastrophes
\parencite{poston78}: the butterfly catastrophe and the elliptic
umbilic.  The elliptic umbilic permits the change from minimum to
maximum at the centre and is inherently associated to the Potts model
because of its symmetry.  The appearance of the butterfly is connected
to triple points of Ising-like subsystems of the three-state Potts
model.  If we disfavor one of the three states, the remaining two act
similarly to an Ising model in a random field.  There is an
interesting global interdependence of these two different
catastrophes.  We summarize the geometry of the extended phase diagram
in the following theorem.

\begin{theorem}
  For each positive \(\beta\) define the so-called catastrophe map
  \(\chi_{\beta}\) from \(\Delta^2\) to \(\Delta^2\) which associates
  to each empirical spin distribution a \(\beta\)-dependent a-priori
  measure \(\alpha\) modelling the external fields:
  \begin{equation}
    \label{eq:cata-map}
    \chi_{\beta}(\nu) =
    \left( \frac{\nu_i e^{-\beta \nu_i}}{\sum_{k=1}^3 \nu_k e^{-\beta \nu_k}} \right)_{i=1}^3
  \end{equation}
  Define also curves \(\gamma_{\beta}\) via
  \begin{equation}
    \label{eq:gamma-curves}
    \gamma_{\beta}(x) =
    \frac{1}{2}\left( 1 - x - \sqrt{(1 - x)^2 - \frac{4(1 - 2\beta x(1-x))}{\beta(2 - 3\beta x)}} \right)
  \end{equation}
  for \(x \in D_{\beta}\) where the domain \(D_{\beta}\) is a union of
  intervals given by
  \begin{equation}
    \label{eq:domain-gamma-curves}
    D_{\beta} = \begin{cases}
      \left( 0,
        1 - \frac{2}{\beta}
      \right] \cup
      \left( \frac{1}{2} - \frac{1}{2}\sqrt{1 - \frac{2}{\beta}},
        \frac{1}{2} + \frac{1}{2}\sqrt{1 - \frac{2}{\beta}}
      \right) & \text{if } 2 < \beta \le \frac{8}{3},
      \\[9pt]
      \left( 0,
        \frac{1}{2} - \frac{1}{2}\sqrt{1 - \frac{2}{\beta}}
      \right) \cup
      \left( 1 - \frac{2}{\beta},
        \frac{1}{2} - \frac{1}{2}\sqrt{1 - \frac{8}{3\beta}}
      \right)
      \\
      \cup \left( \frac{1}{2} + \frac{1}{2} \sqrt{1 - \frac{8}{3\beta}},
        \frac{1}{2} + \frac{1}{2}\sqrt{1 - \frac{2}{\beta}}
      \right)
      & \text{if } \frac{8}{3} \le \beta < 3,
      \\[9pt]
      \left( 0,
        \frac{1}{2} - \frac{1}{2}\sqrt{1 - \frac{2}{\beta}}
      \right) \cup
      \left( \frac{1}{2} - \frac{1}{2}\sqrt{1 - \frac{8}{3\beta}},
        1 - \frac{2}{\beta},
      \right)
      \\
      \cup \left( \frac{1}{2} + \frac{1}{2}\sqrt{1 - \frac{8}{3\beta}},
        \frac{1}{2} + \frac{1}{2}\sqrt{1 - \frac{2}{\beta}}
      \right)
      & \text{if } 3 \le \beta.
    \end{cases}
  \end{equation}
  Then consider the curve \(\Gamma_{\beta}\) in \(\Delta^2\) given by
  \(\Gamma_{\beta}(x) = \chi_{\beta}(x, \gamma_{\beta}(x), 1 - x -
  \gamma_{\beta}(x))\) with \(x \in D_{\beta}\).  By
  \(S_3\Gamma_{\beta}(D_{\beta})\) we denote the orbit of the curve
  \(\Gamma_{\beta}\) under the action of the permutation group \(S_3\)
  acting on \(\Delta^2\).
  \begin{enumerate}
  \item The constant-temperature slices of the bifurcation set from
    Figure~\ref{fig:3d-bifu-set} are given by
    \(S_3\Gamma_{\beta}(D_{\beta})\).
  \item Table~\ref{tab:slices-overview} shows an overview of the
    number of connected components of \(S_3\Gamma_{\beta}(D_\beta)\)
    together with the possible number of local minima.  The exact
    number of local minima in the respective components can be seen in
    Figure \ref{fig:bifu-slice}.
  \end{enumerate}
\end{theorem}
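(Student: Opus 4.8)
The plan is to read off each constant-temperature slice of the bifurcation set as the image, under the catastrophe map $\chi_\beta$, of the degenerate critical points of $f_{\beta,\alpha}$, and then to parametrize that degenerate locus explicitly. First I would compute the stationarity condition of $f_{\beta,\alpha}$ on $\Delta^2$. Introducing a Lagrange multiplier for the constraint $\sum_i\nu_i=1$, the equations $-\beta\nu_i+\log(\nu_i/\alpha_i)+1=\lambda$ solve to $\alpha_i\propto\nu_i e^{-\beta\nu_i}$, and after normalization this is exactly $\alpha=\chi_\beta(\nu)$. Hence $\nu$ is a critical point of $f_{\beta,\alpha}$ if and only if $\alpha=\chi_\beta(\nu)$; equivalently, the catastrophe manifold $\{(\nu,\alpha):\nabla f_{\beta,\alpha}(\nu)=0\}$ is the graph of $\chi_\beta$ over state space, and the projection to the parameter $\alpha$ is $\chi_\beta$ itself.

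Next I would identify the bifurcation set with the image of the singular set of $\chi_\beta$. The unconstrained Hessian of $f_{\beta,\alpha}$ is the diagonal matrix $\mathrm{diag}(1/\nu_i-\beta)$, which I restrict to the tangent space $T=\{v:\sum_iv_i=0\}$. Differentiating the stationarity relation shows that $D\chi_\beta$ equals the restricted Hessian up to an invertible factor, so the two singular loci coincide and a critical point is degenerate exactly where $\det D\chi_\beta=0$. Solving $\nabla^2 f_{\beta,\alpha}|_T\,v\in\mathrm{span}(1,1,1)$ gives $v_i\propto\nu_i/(1-\beta\nu_i)$, and the constraint $\sum_iv_i=0$ yields the clean degeneracy equation $\Phi_\beta(\nu):=\sum_{i=1}^3\nu_i/(1-\beta\nu_i)=0$. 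This defines the singular set $\Sigma_\beta\subset\Delta^2$, whose image $\chi_\beta(\Sigma_\beta)$ is the constant-$\beta$ slice of the bifurcation set. At the barycenter the restricted Hessian is $(3-\beta)$ times the identity, degenerate precisely at $\beta=3$, which is the elliptic umbilic organizing the centre.

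To obtain the stated parametrization I would fix $x=\nu_1$ and solve $\Phi_\beta(x,y,1-x-y)=0$ for $y=\nu_2$. Clearing denominators reduces this to a quadratic in $y$; since $\Phi_\beta$ is symmetric in $\nu_2,\nu_3$ its two roots are $y$ and $1-x-y$, so the relevant branch is $\gamma_\beta(x)$ as in \eqref{eq:gamma-curves}, the product of roots giving the radicand. The admissible $x$ are those for which the discriminant is nonnegative and the resulting triple lies in $\Delta^2$ while avoiding the pole at $x=2/(3\beta)$; this analysis produces the interval structure of $D_\beta$, with the thresholds $\beta=2$ (the edge, i.e. Ising-like, cusp where $\sqrt{1-2/\beta}$ becomes real), $\beta=8/3$ (where $\sqrt{1-8/(3\beta)}$ becomes real, changing the topology of $\Sigma_\beta$), and $\beta=3$ (centre degeneracy) arising as the case boundaries. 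Since $f_{\beta,\alpha}$ and $\chi_\beta$ are $S_3$-equivariant for the simultaneous permutation of coordinates, the full slice is the orbit $S_3\Gamma_\beta(D_\beta)$ with $\Gamma_\beta=\chi_\beta\circ(x,\gamma_\beta(x),1-x-\gamma_\beta(x))$, proving part (1).

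For part (2) I would count the connected components of $S_3\Gamma_\beta(D_\beta)$ and the minima in each complementary cell. The component count follows by tracking, in each $\beta$-regime, how the three symmetric copies of $\Gamma_\beta$ meet: at cusps (the butterfly fold lines of the Ising-like edge subsystems) and at the self-intersections and the central umbilic visible in Figure~\ref{fig:3d-bifu-set}. The minima count in a fixed cell I would pin down by a reference evaluation — near a vertex $\alpha=e_i$ the field overwhelmingly favours one state and $f_{\beta,\alpha}$ has a single minimum — and then propagate across the bifurcation set using that a transversal crossing of a fold changes the number of minima by one, while cusp, butterfly and umbilic strata only reorganize existing transitions. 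The main obstacle is exactly this last step: making the component and minima bookkeeping rigorous across the self-intersections and the degenerate elliptic-umbilic point, rather than merely reading it off the figure, together with the delicate positivity-and-discriminant case analysis that yields the precise form of $D_\beta$.
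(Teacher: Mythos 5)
Your overall route for part (1) coincides with the paper's: stationarity yields the catastrophe map, the degenerate stationary points are cut out by a symmetric condition, that condition is solved as a quadratic in \(\nu_2\) with the branches exchanged by the \(\nu_2\leftrightarrow\nu_3\) symmetry, and the slice is the \(S_3\)-orbit of the image under \(\chi_\beta\). Your derivation of the degeneracy locus via the kernel vector \(v_i\propto\nu_i/(1-\beta\nu_i)\), giving \(\sum_{i=1}^3\nu_i/(1-\beta\nu_i)=0\), is equivalent after clearing denominators to the determinant condition \eqref{eq:deg-cond} of \cref{lem:deg-condition} (with a small caveat at points where \(\beta\nu_i=1\), where your kernel formula is undefined but the polynomial form still applies). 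The genuine gap is that the precise content of the theorem — the interval formula \eqref{eq:domain-gamma-curves} for \(D_\beta\) — is asserted rather than derived. The paper devotes two lemmas to exactly this: \cref{lem:formula} reduces membership of \((x,\gamma_\beta(x),1-x-\gamma_\beta(x))\) in the simplex to the sign of \((1-2\beta x(1-x))/(2-3\beta x)\) and locates the case boundary \(\beta=\frac{8}{3}\) by comparing \(\frac{2}{3\beta}\) with \(\frac12-\frac12\sqrt{1-\frac{2}{\beta}}\); \cref{lem:discriminant} factorizes the discriminant as \(\beta(3\beta x-2)\bigl(2-\beta(1-x)\bigr)\bigl(2-3\beta x(1-x)\bigr)\) and carries out a root-ordering and sign-change analysis in the regimes \(\beta<\frac83\), \(\frac83\le\beta<3\), \(\beta\ge3\). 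Without intersecting these two explicit interval systems you have not established the stated endpoints — in particular which are open or closed (the right endpoint \(1-\frac{2}{\beta}\) is included for \(2<\beta\le\frac83\) because the discriminant vanishes there while the simplex constraints are strict) and the swap of \(1-\frac{2}{\beta}\) with \(\frac12-\frac12\sqrt{1-\frac{8}{3\beta}}\) across \(\beta=3\).

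For part (2) the gap is more substantial. Your propagation scheme (unique minimum near a vertex, transversal fold crossings changing the count of minima by one) is sound and is what the paper implicitly uses, but the regime boundaries \(\frac{18}{7}\), \(\beta_{\mathrm{cross}}\approx 2.7456\) and \(\beta_{\mathrm{touch}}\approx 2.8024\) in \cref{tab:slices-overview} cannot be read off from \(D_\beta\) at all: they are properties of the \emph{image} curve \(\Gamma_\beta\) in \(\alpha\)-space, not of its domain, and your list of thresholds (\(\beta=2,\frac83,3\)) misses them entirely. The paper obtains each by a separate computation: the butterfly value \(\frac{18}{7}\) from the sign changes of the quadratic, cubic and quartic Taylor coefficients of \(\chi_\beta\) along the degenerate curve (\cref{sec:butterfly-taylor}); \(\beta_{\mathrm{cross}}\) from the fold condition for the restriction of \(f_{\beta,\alpha}\) to the symmetry axis at uniform \(\alpha\) (\cref{sec:crossing-temp}); and \(\beta_{\mathrm{touch}}\) from equating the \(p\)-coordinates of the triangle vertex and the midpoint of the fold line (\cref{sec:triangle-touch-temp}). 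You honestly flag the bookkeeping across self-intersections as the \enquote{main obstacle}, but as written the component and minima counts remain unproved; supplying these three critical-temperature computations is precisely what turns figure-reading into a proof.
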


\begin{table}
  \centering
  \begin{tabular}{rcc}
    \toprule
    & cells of \((S_3\Gamma_{\beta}(D_{\beta}))^\complement\) & number of local minima \\
    \midrule
    \(\beta \le 2\) & 1 & 1\\
    \(2 < \beta \le \frac{18}{7}\) & 4 & 1, 2\\
    \(\frac{18}{7} < \beta < \frac{8}{3}\) & 13 & 1, 2, 3\\
    \(\beta = \frac{8}{3}\) & 16 & 1, 2, 3\\
    \(\frac{8}{3} < \beta < \beta_{\mathrm{cross}}\) & 13 & 1, 2, 3\\
    \(\beta = \beta_{\mathrm{cross}}\) & 12 & 1, 2, 3 \\
    \(\beta_{\mathrm{cross}} < \beta < \beta_{\mathrm{touch}}\) & 13 & 1, 2, 3, 4\\
    \(\beta = \beta_{\mathrm{touch}}\) & 10 & 1, 2, 3, 4 \\
    \(\beta_{\mathrm{touch}} < \beta < 3\) & 8 & 1, 2, 3\\
    \(\beta = 3\) & 7 & 1, 2, 3 \\
    \(3 < \beta\) & 8 & 1, 2, 3 \\
    \bottomrule
  \end{tabular}
  \caption{ Overview of the different \(\beta\)-regimes for the
    constant-temperature slices of the bifurcation set.  The critical
    inverse temperatures \(\beta_{\mathrm{cross}}\) and
    \(\beta_{\mathrm{touch}}\) are defined in
    Subsections~\ref{sec:crossing-temp} and
    \ref{sec:triangle-touch-temp}.}
  \label{tab:slices-overview}
\end{table}

\begin{figure}
  \centering
  \includegraphics{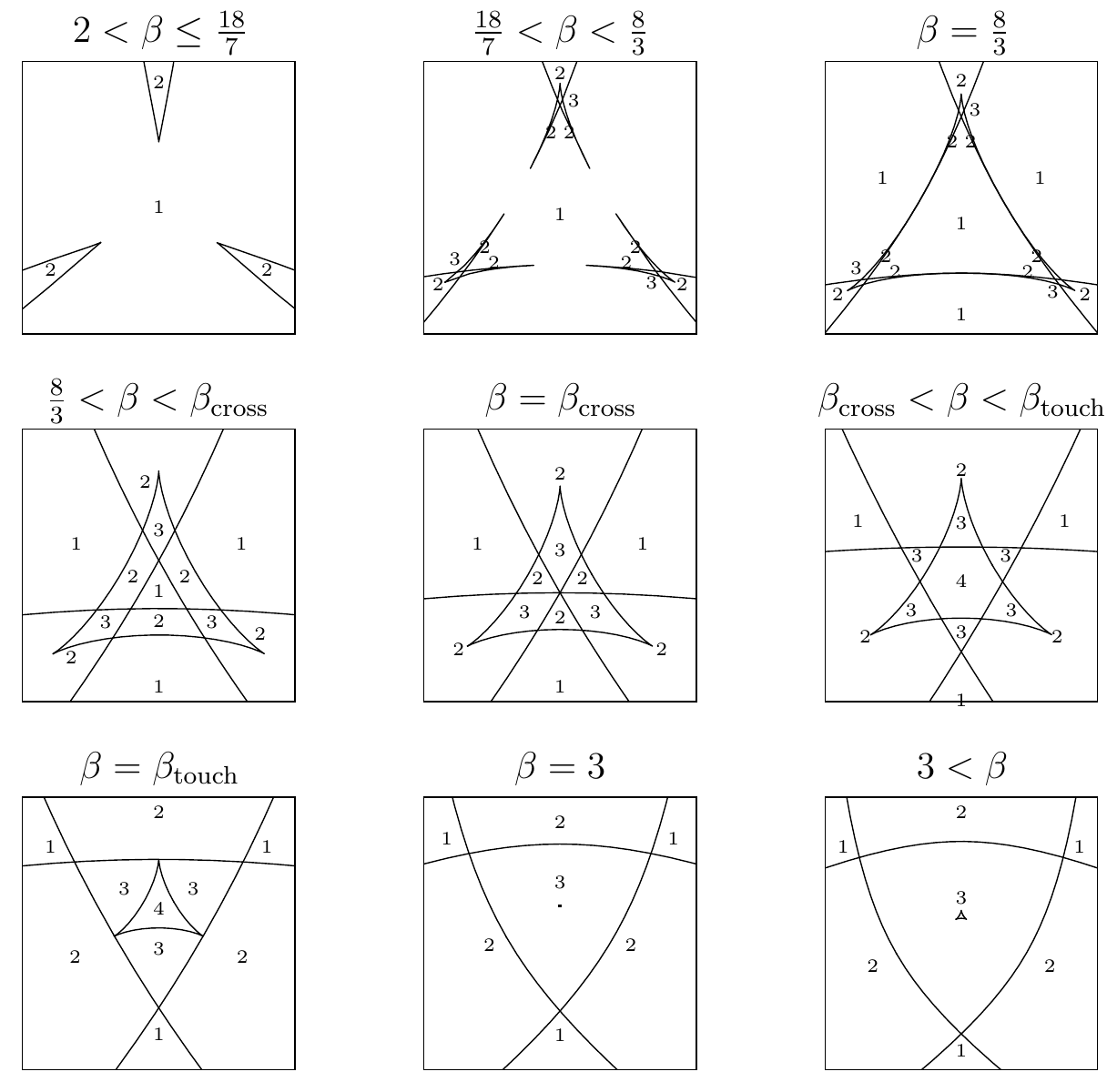}
  \caption{Representative slices through the bifurcation set at
    constant temperatures as indicated by the plot titles.  The
    numbers of the cells show the number of local minima that the rate
    function \(f_{\beta,\alpha}\) has inside of the respective cell.
    The slices are given in
    \((p, q)\)-coordinates~\eqref{eq:sym-fields-coords} as in
    Figure~\ref{fig:3d-bifu-set}.}
  \label{fig:bifu-slice}
\end{figure}

\subsection{Main transitions}

Increasing the inverse temperature from zero we see the following
transitions for slices of the bifurcation set at fixed inverse
temperature, as plots of curves in two-dimensional magnetic field
space (see \cref{fig:bifu-slice}). In connected complements the
structure of stationary points does not change.  We will keep track of
the number of minimizers, which takes all values between one and four.
The Maxwell sets where non-uniqueness of global minimizers occurs are
described in Section~\ref{sec:classical-diagram}.  They have the
meaning of \emph{special magnetic fields} where (in general) the
minimizer can be made to jump by infinitesimal perturbation.  This is
analogous to the notion of \emph{bad empirical measures} in the
dynamical model, in the sense of sequential Gibbsianness
\parencite{kln2007}.

\begin{itemize}
\item \(2 < \beta < \frac{18}{7}\).  First \emph{three symmetric
    cusps} appear at a positive distance to the origin in magnetic
  field space (three \enquote{rockets} pointing towards the origin).
  For magnetic fields inside the cusps we see precisely two minima,
  outside there is one minimum.  For each such inverse temperature,
  the effect of the two-dimensional magnetic field for values in the
  interior of this region to this effective orthogonal Ising model
  translates into an effective inverse temperature times effective
  magnetic field.

\item \(\beta = \frac{18}{7}\) (butterfly).  The three cusps each
  individually develop a butterfly singularity.  The unfolding of the
  pentagram-shaped curve is studied via a Taylor expansion in
  Subsection~\ref{sec:butterfly-taylor}.

\item \(\frac{18}{7} < \beta < \frac{8}{3}\).  The butterfly
  (partially) unfolds, keeping the reflection symmetry.  This
  phenomenon is also known to occur in the Curie-Weiss random field
  Ising model with bimodal disorder \parencite[compare][]{kln2007}.
  The potential has two minima in the outer horns of the pentagram,
  and three minima in the inner horn, as known from the
  one-dimensional polynomial model.  For zero magnetic field there is
  still one minimum in the centre of the simplex.

\item \(\beta = \frac{8}{3}\).  The outer horns (or beaks) of the
  pentagrams grow until they meet symmetrically in a
  \emph{beak-to-beak} singularity. This occurs in three pairs.  A
  one-pair beak-to-beak singularity is also known to occur in the
  parabolic umbilic \parencite[see][]{broecker_1975}.  This touching
  creates a finite connected component at the origin in magnetic field
  space, still with one minimum.

\item \(\frac{8}{3} < \beta < \beta_{\mathrm{cross}} \approx 2.7456\).
  Each two of the beaks (outer horns corresponding to different
  butterflies) have now become connected.  Each such pair now forms a
  joint connected component with two minima.  Their three outer
  boundary curves now form a triangle in the centre.  Each two of the
  former unbounded curves of the butterfly catastrophe have merged
  into one doubly infinite curve at which a fold occurs (fold lines).
  In the connected component at the origin there is still one minimum.

\item \(\beta = \beta_{\mathrm{cross}}\). The triangle stays, the
  three symmetric fold lines move towards the origin.  They pass the
  origin at \(\beta_{\mathrm{cross}}\), when the \enquote{tops of the
    rockets} meet at the origin, and the connected component
  containing the origin with one minimum vanishes.
  \(\beta_{\mathrm{cross}}\) is the parameter value for the appearance
  of symmetric minima near the corners in zero magnetic field.  Hence
  it is simply found by looking at the potential in zero field, along
  the axis of symmetry (see \ref{sec:crossing-temp}).

\item
  \(\beta_{\mathrm{cross}}< \beta < \beta_{\mathrm{touch}}\approx
  2.8024\).  The three rockets move on beyond the origin, they
  intersect, with the \emph{appearance of a middle hexagon}.  In this
  middle hexagon containing the origin there are all four minima
  present.  In zero field the middle minimum is the lowest first but
  moving beyond the Ellis-Wang critical inverse temperature
  \(4 \log 2\), eventually the outer minima become lower. In the
  adjacent six triangles there are three minima.

\item \(\beta \ge \beta_{\mathrm{touch}}\).  Three components with
  three minima vanish, three components remain, as the corners of the
  shrinking triangle touch the fold lines.

\item \(\beta = 3\) (elliptic umbilic).  The triangle at the centre
  has shrunk to a point, the minimum at zero in zero field has become
  a monkey saddle.

\item \(\beta > 3\).  The inner triangle reappears and grows again.
  For zero field there is a maximum at the uniform distribution, and
  three symmetric minimizers near the corners.
\end{itemize}

The series of transitions upon increasing inverse temperature fits to
the basic knowledge of the model without fields
\parencite{elliswang90}: We know that in zero field a) at very low
inverse temperature there is only one local minimum (and this is also
a global minimum) at the uniform distribution, b) at intermediate
inverse temperature there is a local minimum at the uniform
distribution and three symmetric minima near the corners c) at large
\(\beta\) there are only three symmetric minima near the corners.

The change from minimum to maximum of the uniform distribution under
increase of \(\beta\) is explained by an elliptic umbilic.
Additionally, for each of the minima at the corners there is an
additional fold line.  There must be a transition from the situation
of three non-intersecting rockets \(\beta \approx 2.2\) to an umbilic
plus three fold lines seen at the Ellis-Wang inverse
temperature~\(4\log 2\).  This is done with the help of the
three-symmetric-butterflies\,--\,beak-to-beak mechanism.

\subsection{Elements from singularity theory}

In order to derive and explain our results, concepts from singularity
theory will be useful.  The two most basic terms are \emph{catastrophe
  manifold} and\emph{ bifurcation set} of which the second term is
important since it is the basis for the metastable phase diagram.  But
first let us define the two: The \emph{catastrophe manifold} is the
set of \((\beta, \alpha, \nu)\) such that \(\nu\) is a stationary
point of \(f_{\beta, \alpha}\).  The \emph{bifurcation set} is the set
of \((\beta, \alpha)\) such that there exists a degenerate stationary
point \(\nu\) for \(f_{\beta, \alpha}\), that is, a stationary point
at which the Hessian has a zero eigenvalue.  The \emph{catastrophe
  map} \(\chi_{\beta}\) maps empirical spin distributions \(\nu\) to
a-priori measures \(\alpha\) such that the free energy
\(f_{\beta, \alpha}\) has a stationary point at \(\nu\).  We obtain
the expression for the catastrophe map by considering the zeros of the
differential of \(f_{\beta,\alpha}\).  For every tangent vector \(v\)
of the unit simplex we have
\begin{equation}
  \sum_{i=1}^3 v(\nu_i) (-\beta \nu_i - \log \alpha_i + \log \nu_i) = 0.
\end{equation}
We conclude that the second factor in the sum is a constant since the
\(v(\nu_i)\) sum up to zero.  Since \(\alpha\) is an element of the
unit simplex, we have \cref{eq:cata-map}.

The key idea from catastrophe theory is that at parameter values
belonging to the bifurcation set the stationary points of the function
change.  The most generic change is the fold where a minimum and a
maximum collide.  But the more parameters the potential has the more
complex behaviour is possible.  The famous theorem of Thom
\parencite[see Section~5 of Chapter~3 in][]{lu_intro_sing} lists these
possibilities for all potentials with at most four parameters.  We see
two of these so-called \emph{catastrophes} or \emph{singularities} in
the Potts model: the butterfly catastrophe and the elliptic umbilic.

\subsection{Constant-temperature slices of the bifurcation set}

The constant-temperature slices of the bifurcation set are
one-dimensional sets in the sense that we have a parametric
representation of the slices with one parameter but the curves show
pinches and self-intersections.  During the computation of these
curves we can already see some critical behaviour corresponding to the
beak-to-beak scenario and the elliptic umbilic point.

It is convenient to study the degenerate stationary points for fixed
temperature first.  From these points we can obtain the respective
slice of the bifurcation set via the catastrophe map \(\chi_{\beta}\).

\begin{theorem}
  \label{thm:const-temp-slices}
  The set of degenerate stationary points for any positive \(\beta\)
  is given by the symmetrized graph of the smooth function
  \(\gamma_{\beta}\) on \(D_\beta\) defined in
  Equations~(\ref{eq:gamma-curves}) and
  (\ref{eq:domain-gamma-curves}).
\end{theorem}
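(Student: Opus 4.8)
The plan is to reduce the degeneracy condition to a single scalar equation on the simplex and then solve it explicitly. Since for each interior $\nu$ the choice $\alpha=\chi_{\beta}(\nu)$ from \eqref{eq:cata-map} makes $\nu$ a stationary point, the degenerate stationary points are precisely those $\nu$ at which the Hessian of $f_{\beta,\alpha}$, restricted to the tangent space of $\Delta^2$, is singular. First I would compute this Hessian. Because $\log\alpha_i$ enters $f_{\beta,\alpha}$ linearly in $\nu$, the second-derivative matrix in the ambient coordinates is the diagonal matrix $\operatorname{diag}(1/\nu_1-\beta,\,1/\nu_2-\beta,\,1/\nu_3-\beta)$, and degeneracy amounts to degeneracy of the quadratic form $\sum_i(1/\nu_i-\beta)v_i^2$ on the plane $\{\sum_i v_i=0\}$.

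A Lagrange-multiplier (bordered-determinant) computation then shows that, writing $a_i=1/\nu_i-\beta$, a nonzero null vector exists iff $a_iv_i$ is constant in $i$; when all $a_i\neq0$ this forces the condition
\begin{equation}
  \sum_{i=1}^3\frac{1}{a_i}=\sum_{i=1}^3\frac{\nu_i}{1-\beta\nu_i}=0,
\end{equation}
while the only other possibility is that two of the $a_i$ vanish, i.e.\ two coordinates equal $1/\beta$ --- which I expect to appear as the closure (the coalescing endpoints) of the generic locus rather than as a separate stratum. This is the key reduction.

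To solve the scalar equation I would fix $\nu_1=x$ and treat $\nu_2,\nu_3$ as the two roots of a quadratic with sum $1-x$ and product $\pi=\nu_2\nu_3$. Combining the two remaining terms over a common denominator turns the equation into a \emph{linear} equation for $\pi$, yielding
\begin{equation}
  \pi=\frac{1-2\beta x(1-x)}{\beta(2-3\beta x)},
\end{equation}
so that $\nu_2,\nu_3=\tfrac12\bigl(1-x\pm\sqrt{(1-x)^2-4\pi}\,\bigr)$, which is exactly $\gamma_{\beta}(x)$ and $1-x-\gamma_{\beta}(x)$ from \eqref{eq:gamma-curves}. The $S_3$-symmetrization accounts for the choice of distinguished coordinate $\nu_1$.

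The substantive part, and the main obstacle, is identifying the domain $D_{\beta}$ of \eqref{eq:domain-gamma-curves}. Here one must impose that the two roots are real and lie in the simplex, i.e.\ the discriminant $d(x)=(1-x)^2-4\pi\ge0$ and $\pi\ge0$, on $x\in(0,1)$. The boundary values of $D_{\beta}$ are exactly the critical $x$ at which these constraints switch: the numerator of $\pi$ vanishes at $x=\tfrac12\pm\tfrac12\sqrt{1-2/\beta}$ (a root hitting $\nu_i=0$), the denominator vanishes at $x=2/(3\beta)$, and the discriminant vanishes at the roots of the cubic $\beta(2-3\beta x)\,d(x)$. I would factor this cubic; Vieta's relations give one root $x=1-2/\beta$ (where $\nu_2=\nu_3=1/\beta$, the two-vanishing-$a_i$ case above) and a pair $x=\tfrac12\pm\tfrac12\sqrt{1-8/(3\beta)}$, real precisely for $\beta\ge8/3$. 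The three cases of \eqref{eq:domain-gamma-curves} then follow from tracking how these critical points reorder as $\beta$ crosses $8/3$ (where the pole meets $1-2/\beta$, the beak-to-beak coincidence) and $3$ (where $1-2/\beta$ meets a discriminant zero, the elliptic umbilic). The bulk of the work is the sign bookkeeping of $d$ and $\pi$ on each subinterval in each regime; smoothness of $\gamma_{\beta}$ is then immediate, since on the open intervals one has $d>0$ and $2-3\beta x\neq0$, making $\gamma_{\beta}$ a composition of smooth functions.
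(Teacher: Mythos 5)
Your proposal is correct and takes essentially the same route as the paper: your null-vector criterion, after clearing the denominators $1-\beta\nu_i$, is exactly the degeneracy condition \eqref{eq:deg-cond} of \cref{lem:deg-condition}, your Vieta step (linear equation for $\pi=\nu_2\nu_3$) reproduces the quadratic \eqref{eq:deg-cond-coords}, and your two domain constraints --- simplex membership via the sign of $\pi$ and realness via the factored discriminant with roots $\frac{2}{3\beta}$, $1-\frac{2}{\beta}$, $\frac12\pm\frac12\sqrt{1-\frac{8}{3\beta}}$ reordering at $\beta=\frac83$ and $\beta=3$ --- are precisely the content of \cref{lem:formula} and \cref{lem:discriminant}. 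The only point to tighten is that simplex membership requires the strict inequality $\pi>0$ (your endpoint discussion already treats $\pi=0$ as a root hitting the boundary), and your observation that the stratum with two vanishing $a_i$ sits at the coalescing endpoint $x=1-\frac{2}{\beta}$ is consistent with how the symmetrized graph covers it.
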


By the symmetrized graph of \(\gamma_{\beta}\) we mean the orbit
\(S_3\gamma_{\beta}(D_{\beta})\) under the permutation group \(S_3\).
Observe the critical behaviour that \(D_{\beta} = (0, 1)\) for
\(\beta = \frac{8}{3}\) (beak-to-beak) and that the two roots
\(1 - \frac{2}{\beta}\) and
\(\frac{1}{2} - \frac{1}{2}\sqrt{1 - \frac{8}{3\beta}}\) coincide for
\(\beta = 3\) (Elliptic umbilic).  We will now provide a series of
lemmata in preparation of the proof of this theorem.

The set of degenerate stationary points is determined by the so-called
\emph{degeneracy condition}. This condition states that the
determinant of the Hessian matrix at a stationary point vanishes.

\begin{lemma}
  \label{lem:deg-condition}
  The Hessian form of \(f_{\beta,\alpha}\) at the stationary point
  \(\nu\) is degenerate if and only if
  \begin{equation}
    \label{eq:deg-cond}
    3\nu_1\nu_2\nu_3\beta^2
    - 2(\nu_1 \nu_2 + \nu_2 \nu_3 + \nu_3 \nu_1) \beta + 1 = 0.
  \end{equation}
\end{lemma}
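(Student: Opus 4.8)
The plan is to compute the Hessian of $f_{\beta,\alpha}$ explicitly and then restrict it to the tangent space of the simplex. Regarded as a function on the affine hyperplane $\{\sum_i \nu_i = 1\}$, the free energy has the diagonal Hessian
\begin{equation}
  H = \operatorname{diag}\!\left( \tfrac{1}{\nu_1} - \beta,\ \tfrac{1}{\nu_2} - \beta,\ \tfrac{1}{\nu_3} - \beta \right),
\end{equation}
since the field term contributes nothing to second order while $\partial_i\partial_j\big(\sum_k \nu_k\log(\nu_k/\alpha_k)\big) = \delta_{ij}/\nu_i$ and $\partial_i\partial_j\big(-\tfrac12\beta\langle\nu,\nu\rangle\big) = -\beta\delta_{ij}$. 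A stationary point $\nu$ lies in the interior of $\Delta^2$, so $\nu_1\nu_2\nu_3 > 0$ and each entry is finite. The relevant object is the restriction of the form $v \mapsto v^{\mathsf T} H v$ to the tangent space $T = \{v \in \mathbb{R}^3 : \sum_i v_i = 0\}$.

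Next I would characterize degeneracy of this restricted form. It is degenerate precisely when there is a nonzero $v \in T$ in the kernel of the restriction, i.e. with $v^{\mathsf T} H w = 0$ for all $w \in T$; equivalently $Hv$ is orthogonal to $T$, that is $Hv = \mu\,\mathbf{1}$ for some scalar $\mu$. Together with the constraint $\mathbf{1}^{\mathsf T} v = 0$ this is a homogeneous linear system in $(v,\mu)$, which admits a nontrivial solution if and only if the bordered (Lagrange) determinant vanishes:
\begin{equation}
  \det \begin{pmatrix} H & \mathbf{1} \\ \mathbf{1}^{\mathsf T} & 0 \end{pmatrix} = 0.
\end{equation}
I prefer this bordered-determinant formulation over directly solving the componentwise equations $v_i = \mu\,\nu_i/(1-\beta\nu_i)$, because it remains valid without case distinctions even when some diagonal entry $\tfrac{1}{\nu_i} - \beta$ vanishes (i.e. $\nu_i = 1/\beta$). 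This is the only genuine subtlety in the argument, and it is exactly what the bordering removes.

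Finally I would evaluate the determinant. For a diagonal matrix with entries $d_i = \tfrac{1}{\nu_i}-\beta$ bordered by $\mathbf{1}$, cofactor expansion along the last row gives
\begin{equation}
  \det \begin{pmatrix} H & \mathbf{1} \\ \mathbf{1}^{\mathsf T} & 0 \end{pmatrix} = -\sum_{i=1}^3 \prod_{j \neq i} d_j .
\end{equation}
Multiplying by $\nu_1\nu_2\nu_3 > 0$ converts each factor $d_j$ into $1 - \beta\nu_j$, so the degeneracy condition is equivalent to
\begin{equation}
  \nu_1(1-\beta\nu_2)(1-\beta\nu_3) + \nu_2(1-\beta\nu_1)(1-\beta\nu_3) + \nu_3(1-\beta\nu_1)(1-\beta\nu_2) = 0 .
\end{equation}
Expanding and collecting in powers of $\beta$, using $\sum_i \nu_i = 1$, the constant term is $1$, the coefficient of $\beta$ is $-2(\nu_1\nu_2+\nu_2\nu_3+\nu_3\nu_1)$, and the coefficient of $\beta^2$ is $3\nu_1\nu_2\nu_3$, which is precisely \eqref{eq:deg-cond}. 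The computation is a routine expansion of elementary symmetric polynomials; since the one point of care, the possibly vanishing denominators, has already been absorbed into the bordered determinant, no further obstacle remains.
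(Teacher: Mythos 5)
Your proof is correct, and it reaches the paper's equation by a genuinely different linear-algebraic device, though the skeleton is the same. The paper's own proof picks the affine chart \((\nu_1,\nu_2)\) (eliminating \(\nu_3 = 1-\nu_1-\nu_2\)), writes the restricted Hessian as the \(2\times 2\) matrix with diagonal entries \(\frac{1}{\nu_1}+\frac{1}{\nu_3}-2\beta\), \(\frac{1}{\nu_2}+\frac{1}{\nu_3}-2\beta\) and off-diagonal entry \(\frac{1}{\nu_3}-\beta\), and computes its determinant \(3\beta^2 - 2\beta\sum_i \nu_i^{-1} + \sum_{i<j}(\nu_i\nu_j)^{-1}\); multiplying by \(\nu_1\nu_2\nu_3\) and using \(\sum_i\nu_i=1\) gives \eqref{eq:deg-cond}. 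You instead keep the diagonal ambient Hessian with entries \(d_i=\nu_i^{-1}-\beta\) and detect degeneracy of its restriction to \(T=\{\sum_i v_i=0\}\) via the bordered determinant, which you correctly evaluate as \(-(d_1d_2+d_2d_3+d_3d_1)\); your kernel characterization is sound (a kernel element with \(v=0\) forces \(\mu=0\), so nontrivial kernel really does mean a nonzero degenerate direction in \(T\)). Note that this is exactly the same symmetric function the paper obtains in its chart, since \((d_1+d_3)(d_2+d_3)-d_3^2 = d_1d_2+d_2d_3+d_3d_1\), after which both arguments coincide: multiply by \(\nu_1\nu_2\nu_3>0\) and expand in elementary symmetric polynomials using \(\sum_i\nu_i=1\). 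What your version buys is that the \(S_3\) symmetry stays manifest and no coordinate is singled out, which fits the spirit of the rest of the paper; what it costs is a \(4\times4\) determinant and an extra equivalence to justify. One caveat: your stated motivation for the bordering — avoiding case distinctions when some \(d_i=0\) — is an advantage only over the strawman of solving \(v_i=\mu\nu_i/(1-\beta\nu_i)\) componentwise; the paper's chart computation is equally case-free, because a symmetric bilinear form on a two-dimensional space is degenerate iff its determinant in any (affine) coordinate basis vanishes, regardless of vanishing entries. Your implicit use of interiority of stationary points (so that all \(\nu_i>0\) and the \(1/\nu_i\) are finite) is legitimate — the inward derivative of the entropy term diverges at the boundary — and is assumed silently by the paper as well.
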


\begin{proof}
  Let \(\nu\) be a stationary point.  Choose \((\nu_1,\nu_2)\) as
  local coordinates for \(\nu\).  The Hessian form at \(\nu\) is
  represented with respect to the coordinate basis by the matrix
  \[\begin{pmatrix} \frac{1}{\nu_1} + \frac{1}{\nu_3} - 2\beta &
      \frac{1}{\nu_3} - \beta \\ \frac{1}{\nu_3} - \beta &
      \frac{1}{\nu_2} + \frac{1}{\nu_3} - 2\beta \end{pmatrix}.\]
  Calculating the determinant yields \[ 3\beta^2 -
    2\beta\left(\frac{1}{\nu_1} + \frac{1}{\nu_2} +
      \frac{1}{\nu_3}\right) + \frac{1}{\nu_1 \nu_2} + \frac{1}{\nu_1
      \nu_3} + \frac{1}{\nu_2 \nu_3}. \]
\end{proof}

If we rewrite the left-hand side of the degeneracy condition
(\ref{eq:deg-cond}) in \((\nu_1, \nu_2)\)-coordinates, it is a
quadratic function of \(\nu_2\) for fixed \(\beta\) and \(\nu_1\):
\begin{equation}
  \label{eq:deg-cond-coords}
  \beta(2-3\beta \nu_1) \nu_2^2 -
  \beta(2 - 3\beta\nu_1)(1-\nu_1)\nu_2 +
  1 - 2\beta\nu_1(1-\nu_1) = 0
\end{equation}
This equation has at most two solutions and one of them is given by
\begin{equation}
  \nu_2 = \gamma_{\beta}(\nu_1).
  \label{eq:quadratic-solution}
\end{equation}
The possible other solution is obtained by applying the respective
symmetry operation (exchanging the second and third component of
\(\nu\)).  The domain of \(\gamma_{\beta}\) is determined by the sign
of the discriminant and the additional condition that makes sure that
the result of \(\gamma_{\beta}\) is a point in the unit simplex.  Let
us investigate the latter condition first (Lemma~\ref{lem:formula})
and come to the condition imposed by nonnegativity of the discriminant
(Lemma~\ref{lem:discriminant}) afterwards.

First note that the solution formula~(\ref{eq:gamma-curves}) is not
defined for \(x = \frac{2}{3\beta}\) but converges to \(\pm\infty\)
because
\begin{equation}
  1 - 2 \beta x (1 - x) = \frac{1}{3}\left( \frac{8}{3\beta} - 1 \right)
  \begin{cases}
    > 0 & \beta < \frac{8}{3}\\
    = 0 & \beta = \frac{8}{3}\\
    < 0 & \beta > \frac{8}{3}
  \end{cases}
\end{equation}
except if \(\beta = \frac{8}{3}\) where the limit is
\(\frac{1}{4}\).  Furthermore, the domain of \(\gamma_{\beta}\)
must be such that \((x, \gamma_{\beta}(x), 1 - x -
\gamma_{\beta}(x))\) lies in the unit simplex, that is, we have to
analyse the following system of inequalities:
\begin{equation}
  \label{eq:unit-simplex-ineq}
  \begin{split}
    0 &< x < 1\\
    0 &< \gamma_{\beta}(x) < 1 - x
  \end{split}
\end{equation}

\begin{lemma}
  \label{lem:formula}
  Provided that \(\gamma_{\beta}(x)\) is a real number, we find:
  \begin{enumerate}
  \item For \(\beta < \frac{8}{3}\) the system (\ref{eq:unit-simplex-ineq}) is
    equivalent to
    \begin{equation}
      x \in
      \left( 0, \frac{2}{3\beta} \right) \cup
      \left( \frac{1}{2} - \frac{1}{2} \sqrt{1 - \frac{2}{\beta}},
        \frac{1}{2} + \frac{1}{2} \sqrt{1 - \frac{2}{\beta}}
      \right)
    \end{equation}
  \item For \(\frac{8}{3} \le \beta\) the system (\ref{eq:unit-simplex-ineq}) is
    equivalent to
    \begin{equation}
      x \in
      \left( 0,
        \frac{1}{2} - \frac{1}{2}\sqrt{1 - \frac{2}{\beta}}
      \right) \cup
      \left( \frac{2}{3\beta},
        \frac{1}{2} + \frac{1}{2} \sqrt{1 - \frac{2}{\beta}}
      \right)
    \end{equation}
  \end{enumerate}
\end{lemma}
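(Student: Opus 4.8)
The plan is to reduce the two-sided membership condition \eqref{eq:unit-simplex-ineq} to a single sign condition on a rational function, and then to perform an elementary sign analysis whose breakpoints are exactly the interval endpoints listed in the two cases. Throughout write $\gamma_\beta(x)=\tfrac12(1-x)-\tfrac12\sqrt{R_\beta(x)}$ with radicand $R_\beta(x)=(1-x)^2-\tfrac{4(1-2\beta x(1-x))}{\beta(2-3\beta x)}$, which is real by hypothesis. First I would dispose of the upper bound: since the square root is nonnegative we have $\gamma_\beta(x)\le\tfrac12(1-x)$, so under the standing constraint $0<x<1$ the inequality $\gamma_\beta(x)<1-x$ holds automatically. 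Thus the second line of \eqref{eq:unit-simplex-ineq} collapses to its lower bound $\gamma_\beta(x)>0$.

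Next I would rewrite the lower bound. Because $0<x<1$, the inequality $\tfrac12(1-x)-\tfrac12\sqrt{R_\beta(x)}>0$ is equivalent to $1-x>\sqrt{R_\beta(x)}$ with both sides nonnegative, so squaring is reversible and yields $(1-x)^2>R_\beta(x)$. Substituting the definition of $R_\beta$ this simplifies to
\[
  \frac{1-2\beta x(1-x)}{\beta(2-3\beta x)}>0 .
\]
Hence the whole system \eqref{eq:unit-simplex-ineq} is equivalent to the conjunction of $0<x<1$ with positivity of this single ratio, and the remaining task is purely a sign chart on $(0,1)$.

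Now I would analyze the numerator $N(x)=1-2\beta x(1-x)$ and denominator $M(x)=\beta(2-3\beta x)$ separately. The denominator is positive for $x<\tfrac{2}{3\beta}$ and negative for $x>\tfrac{2}{3\beta}$. The numerator is a downward parabola with maximal value $1-\tfrac\beta2$ at $x=\tfrac12$: for $\beta\le2$ it is positive on all of $(0,1)$, whereas for $\beta>2$ it is negative exactly on $(x_-,x_+)$ with $x_\pm=\tfrac12\pm\tfrac12\sqrt{1-\tfrac2\beta}$. The ratio is positive precisely where $N$ and $M$ carry the same sign, so to read off the resulting union of intervals I need the ordering of $\tfrac{2}{3\beta}$, $x_-$ and $x_+$ inside $(0,1)$.

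This ordering is the step I expect to demand the most care, and it is exactly where the threshold $\tfrac83$ appears. For $\beta\ge2$ one checks $0<\tfrac{2}{3\beta}\le\tfrac13<\tfrac12\le x_+<1$ and $0<x_-<x_+$, so the only comparison in doubt is between $\tfrac{2}{3\beta}$ and $x_-$; solving $\tfrac{2}{3\beta}=x_-$ (conveniently via the substitution $t=\sqrt{1-2/\beta}$, which reduces it to $\tfrac12=\tfrac{1+t}{3}$) gives the unique solution $\beta=\tfrac83$, with $\tfrac{2}{3\beta}<x_-$ for $\beta<\tfrac83$ and $x_-\le\tfrac{2}{3\beta}$ for $\beta\ge\tfrac83$. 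In the regime $\beta<\tfrac83$ the sign-agreement regions are $(0,\tfrac{2}{3\beta})$ (both factors positive) and $(x_-,x_+)$ (both negative), which is case~1; for $\beta\le2$ the numerator never vanishes, so the second interval is empty and one retains $(0,\tfrac{2}{3\beta})$ as the formula prescribes. In the regime $\beta\ge\tfrac83$ the order becomes $x_-\le\tfrac{2}{3\beta}<x_+$, turning the two regions into $(0,x_-)$ and $(\tfrac{2}{3\beta},x_+)$, which is case~2. Since every endpoint already lies in $[0,1]$ in the relevant range, intersecting with $(0,1)$ leaves the stated intervals unchanged, completing the equivalence.
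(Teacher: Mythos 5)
Your proof is correct and takes essentially the same route as the paper's: both dispose of the upper bound \(\gamma_{\beta}(x) < 1-x\) as automatic, reduce \(\gamma_{\beta}(x) > 0\) by a reversible squaring to positivity of the ratio \(\frac{1 - 2\beta x(1-x)}{2 - 3\beta x}\), and locate the threshold \(\beta = \frac{8}{3}\) by comparing \(\frac{2}{3\beta}\) with \(\frac{1}{2} - \frac{1}{2}\sqrt{1 - \frac{2}{\beta}}\) (the paper by squaring that inequality, you via the substitution \(t = \sqrt{1 - 2/\beta}\), which amounts to the same computation). One harmless slip of wording: \(1 - 2\beta x(1-x) = 2\beta\left(x - \frac{1}{2}\right)^2 + 1 - \frac{\beta}{2}\) is an \emph{upward} parabola with \emph{minimal} value \(1 - \frac{\beta}{2}\), not a downward one with a maximum, but the sign facts you actually use (negative precisely between the roots \(\frac{1}{2} \pm \frac{1}{2}\sqrt{1 - \frac{2}{\beta}}\) when \(\beta > 2\), positive outside) are exactly those of this convex parabola, so the argument is unaffected.
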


\begin{proof}
  \begin{gather}
    0 < \gamma_{\beta}(x) < 1 - x\\
    \iff -(1-x) < \sqrt{(1 - x)^2 - \frac{4(1 - 2\beta x(1-x))}{\beta(2 - 3\beta x)}} < 1 - x
  \end{gather}

  The first inequality is trivially true because the square root is
  non-negative and \(1-x>0\).  Therefore we must only check the second
  inequality:

  \begin{gather}
    \sqrt{(1 - x)^2 - \frac{4(1 - 2\beta x(1-x))}{\beta(2 - 3\beta x)}} < 1 - x
    \\
    \iff
    0 < \frac{1 - 2\beta x(1-x)}{2 - 3\beta x}\\
  \end{gather}

  Suppose first \(2 - 3 \beta x > 0\). Then the inequality is
  equivalent to
  \begin{gather}
    1 - 2\beta x(1-x) > 0\\
    \iff 2\beta \left( x - \frac{1}{2} \right)^2 + 1 - \frac{\beta}{2} > 0
  \end{gather}
  As \(2\beta > 0\), this is an upfacing parabola whose minimal
  functional value is \(1 - \frac{\beta}{2}\) which is negative since
  \(\beta > 2\).  The roots of this parabola are:
  \begin{gather}
    2\beta(x - \frac{1}{2})^2 + 1 - \frac{\beta}{2} = 0\\
    \iff x = \frac{1}{2} \pm \frac{1}{2} \sqrt{1 - \frac{2}{\beta}}
  \end{gather}
  The solution is therefore the union of the two intervals \(0 < x <
  \frac{1}{2} - \frac{1}{2}\sqrt{1 - \frac{2}{\beta}}\) and
  \(\frac{1}{2} + \frac{1}{2}\sqrt{1 - \frac{2}{\beta}} < x <
  1\).

  Now, if \(2 - 3\beta x < 0\), we are looking for such \(x\) that the
  values of \(1 - 2\beta x (1-x)\) are negative.  This is the case if
  \begin{equation}
    \frac{1}{2} - \frac{1}{2}\sqrt{1 - \frac{2}{\beta}}
    < x < \frac{1}{2} + \frac{1}{2}\sqrt{1 - \frac{2}{\beta}}.
  \end{equation}

  To arrive at the claim in the lemma, we must investigate the order
  of \(\frac{2}{3\beta}\) and
  \(\frac{1}{2} \pm \frac{1}{2} \sqrt{1 - \frac{2}{\beta}}\).  We
  therefore analyse the inequality
  \begin{equation}
    \frac{1}{2} - \frac{1}{2} \sqrt{1 - \frac{2}{\beta}} > \frac{2}{3\beta}.
  \end{equation}
  Squaring both sides of the inequality reveals that it is equivalent
  to \(\beta < \frac{8}{3}\) which proves the claim.
\end{proof}

Let us continue with the analysis of the discriminant of the quadratic
equation~(\ref{eq:deg-cond-coords}). It is given by
\begin{equation}
  \begin{split}
    \beta^2(2-3\beta\nu_1)^2(1-\nu_1)^2 - 4\beta(2-3\beta\nu_1)(1-2\beta\nu_1(1-\nu_1))\\
    = \beta (3\beta \nu_1 - 2) (2 - \beta(1-\nu_1)) (2 - 3\beta \nu_1(1-\nu_1)).
  \end{split}
\end{equation}
We see that it has at most four possible roots depending on the
value of \(\beta\).  More precisely we find:

\begin{lemma}
  \label{lem:discriminant}
  Consider the function
  \begin{equation}
    \label{eq:eff_disc}
    x \mapsto (3\beta x - 2) (2 - \beta(1-x)) (2 - 3\beta x(1-x))
  \end{equation}
  for real \(x\).
  \begin{enumerate}
  \item For \(\beta < \frac{8}{3}\) this function has the two roots
    \(1 - \frac{2}{\beta} < \frac{2}{3\beta}\) and takes positive
    values only on
    \((-\infty, 1 - \frac{2}{\beta}) \cup (\frac{2}{3\beta},
    \infty)\).
  \item For \(\frac{8}{3} \le \beta < 3\) this function has the roots
    \(\frac{2}{3\beta} \le 1 - \frac{2}{\beta} < \frac{1}{2} -
    \frac{1}{2}\sqrt{1 - \frac{8}{3\beta}} \le \frac{1}{2} +
    \frac{1}{2}\sqrt{1 - \frac{8}{3\beta}}\) and takes positive values
    only on
    \[
      \left( -\infty, \frac{2}{3\beta} \right) \cup
      \left( 1 - \frac{2}{\beta},
        \frac{1}{2} - \frac{1}{2}\sqrt{1 - \frac{8}{3\beta}}
      \right) \cup
      \left( \frac{1}{2} + \frac{1}{2}\sqrt{1 - \frac{8}{3\beta}},
        \infty
      \right).
    \]  The equality of the roots is achieved for \(\beta = \frac{8}{3}\).
  \item For \(\beta \ge 3\) this function has the roots
    \(\frac{2}{3\beta} < \frac{1}{2} - \frac{1}{2}\sqrt{1 -
      \frac{8}{3\beta}} \le 1 - \frac{2}{\beta} < \frac{1}{2} +
    \frac{1}{2}\sqrt{1 - \frac{8}{3\beta}}\) and takes positive values
    only on
    \[
      \left( -\infty, \frac{2}{3\beta} \right) \cup
      \left( \frac{1}{2} - \frac{1}{2}\sqrt{1 - \frac{8}{3\beta}},
        1 - \frac{2}{\beta}
      \right) \cup
      \left( \frac{1}{2} + \frac{1}{2}\sqrt{1 - \frac{8}{3\beta}},
        \infty
      \right)
    \]
    The equality of the roots is achieved for \(\beta = 3\).
  \end{enumerate}
\end{lemma}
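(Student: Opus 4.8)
The plan is to exploit the factorization displayed just before the lemma, so that the function in \eqref{eq:eff_disc} is, up to the strictly positive factor $\beta$, the product of the three factors $3\beta x - 2$, $2-\beta(1-x)$ and $2 - 3\beta x(1-x)$. The first two are linear, with the single roots $\frac{2}{3\beta}$ and $1-\frac{2}{\beta}$ respectively; the third is the quadratic $3\beta x^2 - 3\beta x + 2$ with discriminant $9\beta(\beta - \frac{8}{3})$, so it has the two real roots $\frac{1}{2}\mp\frac{1}{2}\sqrt{1-\frac{8}{3\beta}}$ precisely when $\beta \ge \frac{8}{3}$, and is otherwise strictly positive (its value at $x=0$ is $2$). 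Since the sign-irrelevant prefactor $\beta$ is positive and the leading coefficient of the product equals $9\beta^3 > 0$, the function tends to $+\infty$ as $x\to\pm\infty$. Hence, once the real roots are linearly ordered, the sign is forced by the usual alternation: positive on the two unbounded end intervals and alternating in between, with a double root suppressing the sign change across it. The whole lemma thus reduces to (i) counting real roots and (ii) ordering them.

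For $\beta < \frac{8}{3}$ (case 1) the quadratic factor is a strictly positive multiplier, so the sign of the whole expression equals that of the upward parabola $(3\beta x-2)\,(2-\beta(1-x))$ with roots $\frac{2}{3\beta}$ and $1-\frac{2}{\beta}$. The difference $1-\frac{2}{\beta}-\frac{2}{3\beta} = 1-\frac{8}{3\beta}$ is negative exactly when $\beta<\frac{8}{3}$, so $1-\frac{2}{\beta}<\frac{2}{3\beta}$ there, and the positive region is $(-\infty, 1-\frac{2}{\beta}) \cup (\frac{2}{3\beta}, \infty)$ as claimed.

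For $\beta \ge \frac{8}{3}$ (cases 2 and 3) the content is entirely the ordering of the four roots, and I would obtain it by pairwise comparisons, each reduced to a polynomial inequality by isolating the square root and squaring with an explicit sign check. The cheap comparisons give $\frac{2}{3\beta} \le 1-\frac{2}{\beta}$ with equality iff $\beta=\frac{8}{3}$ (the difference is $1-\frac{8}{3\beta}$), $\frac{2}{3\beta} < \frac{1}{2}-\frac{1}{2}\sqrt{1-\frac{8}{3\beta}}$ for all $\beta\ge\frac{8}{3}$, and $1-\frac{2}{\beta} < \frac{1}{2}+\frac{1}{2}\sqrt{1-\frac{8}{3\beta}}$, while $\frac{1}{2}-\frac{1}{2}\sqrt{1-\frac{8}{3\beta}}\le\frac{1}{2}+\frac{1}{2}\sqrt{1-\frac{8}{3\beta}}$ holds by construction. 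The decisive comparison is between $1-\frac{2}{\beta}$ and $\frac{1}{2}-\frac{1}{2}\sqrt{1-\frac{8}{3\beta}}$: writing it as $\frac{2}{\beta}-\frac{1}{2} = \frac{1}{2}\sqrt{1-\frac{8}{3\beta}}$ and squaring (after noting the left side must be nonnegative, i.e.\ $\beta\le 4$) collapses to $\frac{1}{\beta} = \frac{1}{3}$, so these two roots cross exactly at $\beta=3$; a sign check on either side then yields $1-\frac{2}{\beta} < \frac{1}{2}-\frac{1}{2}\sqrt{1-\frac{8}{3\beta}}$ for $\frac{8}{3}\le\beta<3$ and the reverse for $\beta>3$, which is precisely the transition between the orderings of cases 2 and 3.

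Finally I would feed each ordering into the sign alternation. For $\frac{8}{3}\le\beta<3$ the order $\frac{2}{3\beta}\le 1-\frac{2}{\beta} < \frac{1}{2}-\frac{1}{2}\sqrt{1-\frac{8}{3\beta}}\le\frac{1}{2}+\frac{1}{2}\sqrt{1-\frac{8}{3\beta}}$ makes the function positive on the first, third, and fifth of the resulting intervals, giving the union stated in case 2; for $\beta\ge 3$ the swapped order $\frac{2}{3\beta} < \frac{1}{2}-\frac{1}{2}\sqrt{1-\frac{8}{3\beta}}\le 1-\frac{2}{\beta} < \frac{1}{2}+\frac{1}{2}\sqrt{1-\frac{8}{3\beta}}$ gives case 3. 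The boundary values are then checked directly: at $\beta=\frac{8}{3}$ the outer pairs of roots coincide, producing double roots at $\frac14$ and $\frac12$ where the function merely touches zero, and at $\beta=3$ the coincidence $1-\frac{2}{\beta}=\frac{1}{2}-\frac{1}{2}\sqrt{1-\frac{8}{3\beta}}=\frac13$ is a double root lying inside a negative interval, so the middle positive interval degenerates to the empty set — both consistent with the stated formulas. I expect the main obstacle to be purely the bookkeeping of this root ordering, and in particular the careful sign control when squaring the root comparisons, so that the single crossover at $\beta=3$ and the double coincidence at $\beta=\frac{8}{3}$ are pinned down rigorously rather than merely read off a picture.
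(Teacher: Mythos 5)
Your proposal is correct, and while it shares the paper's skeleton (use the displayed factorization, compute the four candidate roots including the quadratic roots \(\tfrac12 \mp \tfrac12\sqrt{1-\tfrac{8}{3\beta}}\) which are real exactly for \(\beta \ge \tfrac83\), then order them), it settles the two bookkeeping steps by genuinely different means. For the sign determination the paper argues \emph{locally}: it evaluates the derivative of the function at each root, e.g.\ \(g'\bigl(\tfrac{2}{3\beta}\bigr) = 4\bigl(\tfrac83 - \beta\bigr)\) and \(g'\bigl(1-\tfrac{2}{\beta}\bigr) = 12\bigl(\beta - \tfrac83\bigr)(3-\beta)\), and at the quadratic roots via \eqref{eq:deriv_at_qu_roots}, reading off whether the sign changes there; you argue \emph{globally}: the product is a quartic with leading coefficient \(9\beta^3 > 0\), hence positive on both unbounded intervals, with alternation across simple roots and no change across double roots. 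Your version handles the degenerate temperatures \(\beta = \tfrac83\) and \(\beta = 3\) automatically through root multiplicity (double roots at \(\tfrac14\), \(\tfrac12\), respectively the empty middle interval at \(\beta = 3\)), where the paper needs separate remarks that "the sign does not change" at coinciding roots. For the ordering, the paper inserts explicit separator constants (\(1-\tfrac{2}{\beta} < \tfrac14 < \tfrac{2}{3\beta}\) in case 1, \(1-\tfrac{2}{\beta} < \tfrac13 < \tfrac12 - \tfrac12\sqrt{1-\tfrac{8}{3\beta}}\) in case 2, and the bound \(\sqrt{1+x} < 1+\tfrac{x}{2}\) in case 3), whereas you do pairwise squaring comparisons with sign control plus the clean observation that the only crossover of \(1-\tfrac{2}{\beta}\) and \(\tfrac12 - \tfrac12\sqrt{1-\tfrac{8}{3\beta}}\) on \(\bigl[\tfrac83,\infty\bigr)\) is at \(\beta = 3\) (your \(\beta \le 4\) caveat before squaring is exactly the needed sign check, since for \(\beta > 4\) the equality is impossible outright), so one sign check on each side of \(\beta=3\) settles cases 2 and 3 together; your computations (\(1-\tfrac{2}{\beta}-\tfrac{2}{3\beta} = 1 - \tfrac{8}{3\beta}\), discriminant \(9\beta(\beta-\tfrac83)\), collapse to \(\beta = 3\)) all check out. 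The trade-off: your route is more uniform and makes the boundary degenerations fall out for free, while the paper's derivative evaluations certify the local type of each root (crossing versus touching) directly. One cosmetic slip: the function in \eqref{eq:eff_disc} is already the discriminant divided by \(\beta\), so your phrase about working "up to the strictly positive factor \(\beta\)" describes the displayed discriminant rather than the lemma's function — harmless, as \(\beta > 0\) throughout.
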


\begin{proof}
  The expression for the roots follow from the product form of the
  function.  Since we know all roots, the set where the function takes
  positive values is determined by the sign change at the roots.
  First, let us consider \(\beta < \frac{8}{3}\).  This implies the
  order of the roots since
  \(1 - \frac{2}{\beta} < \frac{1}{4} < \frac{2}{3\beta}\). The value
  of the derivative at the roots tells us how the sign changes.
  Denote the above function \eqref{eq:eff_disc} by \(g\).  Then we
  find:
  \begin{align*}
    g'\left( \frac{2}{3\beta} \right)    &= 4\left( \frac{8}{3} - \beta \right) \\
    g'\left( 1 - \frac{2}{\beta} \right) &= 12\left( \beta - \frac{8}{3} \right)(3 - \beta)
  \end{align*}
  This proves the case \(\beta < \frac{8}{3}\).

  Secondly, consider the case \(\frac{8}{3} \le \beta < 3\) but first
  assume \(\beta > \frac{8}{3}\).  The order of the of the roots
  \(\frac{2}{3\beta}\) and \(1-\frac{2}{\beta}\) now reverses and
  \(\beta < 3\) implies \(1 - \frac{2}{\beta} < \frac{1}{3} <
  \frac{1}{2} - \frac{1}{2}\sqrt{1 - \frac{8}{3\beta}}\).  Let
  us now analyse how the sign changes at the two largest roots.  Let
  \(x_0\) be any of the two roots of \(2 - 3\beta x(1-x)\).  We find
  \begin{equation}
    \label{eq:deriv_at_qu_roots}
    g'(x_0) = (3\beta x_0 - 2)(2 - \beta(1 - x_0)) \cdot 3\beta (2x_0 - 1).
  \end{equation}
  Since \(3\beta x - 2\) and \(2 - \beta(1 - x)\) are increasing and
  \(x_0\) is larger than their roots, the sign of \(g'(x_0)\) is
  determined by the sign of \(2x_0 - 1\) which is negative for
  \(x_0 = \frac{1}{2} - \frac{1}{2}\sqrt{1 - \frac{8}{3\beta}}\) and
  positive for
  \(x_0 = \frac{1}{2} + \frac{1}{2}\sqrt{1 - \frac{8}{3\beta}}\).  In
  the case \(\beta = \frac{8}{3}\) the roots \(\frac{2}{3\beta}\) and
  \(1 - \frac{2}{\beta}\) as well as the two largest roots coincide
  and for both the sign does not change.

  Let us now consider the last case \(3 \le \beta\) focusing on
  \(3 < \beta\) first.  Let us check the order of the roots.  Using
  the inequality \(\sqrt{1 + x} < 1 + \frac{1}{2}x\) for \(x < 0\), we
  find that
  \(\frac{2}{3\beta} < \frac{1}{2} - \frac{1}{2}\sqrt{1 -
    \frac{8}{3\beta}}\).  Similarly to the previous case the reversed
  inequality \(\beta > 3\) now implies the reversed inequality
  \(\frac{1}{2} - \frac{1}{2} \sqrt{1 - \frac{8}{3\beta}} <
  \frac{1}{3} < 1 - \frac{2}{\beta}\).  The last inequality for the
  roots is in fact equivalent to \(3 < \beta\):
  \begin{gather}
    1 - \frac{2}{\beta} < \frac{1}{2}\left(1 + \sqrt{1 - \frac{8}{3\beta}}
    \right)
    \\
    \iff
    1 + \frac{16}{\beta^2} - \frac{8}{\beta} < 1 - \frac{8}{3\beta}
    \\
    \iff
    3 < \beta
  \end{gather}

  Let us analyse how the sign changes at the two roots of
  \(2 - 3\beta x (1 - x)\).  This can be done using formula
  \eqref{eq:deriv_at_qu_roots}.  The first factor is positive for both
  roots since \(\frac{2}{3\beta}\) is the smallest root of the
  discriminant.  However, since \(1 - \frac{2}{\beta}\) lies in
  between the two roots of \(2 - 3\beta x (1 - x)\), the sign of
  \(g'\) is positive at both roots
  \(\frac{1}{2} \pm \frac{1}{2} \sqrt{1 - \frac{8}{3\beta}}\).  For
  \(\beta = 3\) the roots \(1 - \frac{2}{\beta}\) and
  \(\frac{1}{2} - \frac{1}{2} \sqrt{1 - \frac{8}{3\beta}}\) coincide
  and form a local maximum.
\end{proof}

\begin{proof}[Proof of \cref{thm:const-temp-slices}]
  The expression for \(\gamma_{\beta}(x)\)
  (formula~(\ref{eq:gamma-curves})) is obtained by solving the
  quadratic equation~(\ref{eq:deg-cond-coords}) for \(\nu_2\).  In
  order to determine the domain \(D_{\beta}\) we have to intersect the
  sets from Lemma \ref{lem:formula}, which ensures that
  \((x, \gamma_\beta(x), 1 - x - \gamma_\beta(x))\) lies in the unit
  simplex, and Lemma \ref{lem:discriminant} ensuring the nonnegativity
  of the discriminant.
\end{proof}

\subsection{Computation of the critical temperatures}

We discuss the various critical temperatures in increasing order.

\subsubsection{The butterfly temperature}
\label{sec:butterfly-taylor}

Looking at the constant-temperature slices of the bifurcation set in
the regime \(2 < \beta <\frac{8}{3}\), we find a qualitative change of
the curve (compare \cref{fig:bifu-slice}): A pentagram-like shape
unfolds.  The butterfly temperature is defined by the \(\beta\) at
which this happens which is for \(\beta = \frac{18}{7}\).  This can be
seen by a Taylor expansion of the curve which describes the
constant-temperature slices as the coefficients undergo sign changes.
Because of symmetry, it does not matter which of the three rockets we
consider.  Let us consider the degenerate stationary points with
\(\nu_3 \le \min\{\nu_1,\nu_2\}\).  More precisely, consider the
degeneracy equation \eqref{eq:deg-cond} in the following coordinates:
\begin{equation}
  \label{eq:xy-coords}
  \begin{split}
    x &= \frac{\sqrt{3}}{2}(\nu_1-\nu_2)\\
    y &= \frac{1}{2}(3\nu_3-1)
  \end{split}
\end{equation}
In these coordinates the unit simplex \(\Delta^2\) is an equilateral
triangle with center at the origin.  The equation then reads
\begin{equation}
  \label{eq:deg_xy_coords}
  \frac{1}{9} \left(6 \beta \left(x^2+y^2-1\right)
    +\beta ^2 (2 y+1) \left((y-1)^2-3 x^2\right)+9
  \right)=0.
\end{equation}
For \(x = 0\), this equation reads
\begin{equation}
  2\beta^2 y^3 + 3\beta(2 - \beta)y^2 + (\beta - 3)^2 = 0.
\end{equation}
which has a single, negative root \(y_0\).  Using the Implicit
Function Theorem the set of degenerate stationary points locally
around \((0, y_0)\) is the graph of a function \(y = g_{\beta}(x)\)
which solves \eqref{eq:deg_xy_coords}.  It is of course also possible
to obtain the values of \(g_\beta^{(n)}(0)\) using Implicit
Differentiation which allows us to write down a Taylor expansion for
\(g_{\beta}\).  If we plug this into the catastrophe map
\(\chi_{\beta}\) we arrive at an expansion for the respective slice of
the bifurcation set:
\begin{equation}
  \begin{split}
    \chi_{\beta}&(x, g_{\beta}(x)) - \chi_{\beta}(0, y_0) = \\
    -\begin{pmatrix}
      1\\ 1\end{pmatrix} &\left(\frac{27}{2} \left(\beta -\frac{18}{7}\right)+O\left(\left(\beta
          -\frac{18}{7}\right)^2\right)\right) x^2 +\\
    &\begin{pmatrix}-1\\1\end{pmatrix} \left(\frac{36}{7} \sqrt{3} \left(\beta -\frac{18}{7}\right)+O\left(\left(\beta -\frac{18}{7}\right)^2\right)\right) x^3 +\\
    &\begin{pmatrix}1\\1\end{pmatrix} \left(\frac{39366}{2401}-\frac{48519}{343} \left(\beta -\frac{18}{7}\right)+O\left(\left(\beta -\frac{18}{7}\right)^2\right)\right)x^4 + O\left(x^5\right)
  \end{split}
\end{equation}
This has been achieved using exact computations in Mathematica.  Here
we have slightly abused notation by writing \(\chi_{\beta}\) for the
coordinate representation of \(\chi_{\beta}\) using
\((x, y)\)-coordinates in the source and the following coordinates in
the target space:
\begin{equation}
  \label{eq:uv-coords}
  \begin{split}
    u &= \log\frac{\alpha_1}{\alpha_3}\\
    v &= \log\frac{\alpha_2}{\alpha_3}
  \end{split}
\end{equation}
So the coordinate representation of the catastrophe
map~\(\chi_{\beta}\) is given by
\begin{equation}
  \begin{split}
    (x, y) \mapsto \begin{pmatrix}
      \log\frac{1 - y - \sqrt{3}x}{1 + 2y} + \beta\left( y + \frac{x}{\sqrt{3}} \right)
      \\
       \log\frac{1 - y + \sqrt{3}x}{1 + 2y} + \beta\left( y - \frac{x}{\sqrt{3}} \right)
    \end{pmatrix}.
  \end{split}
\end{equation}

\subsubsection{The crossing temperature}
\label{sec:crossing-temp}

\begin{lemma}
  The inverse crossing temperature is given by
  \begin{equation}
    \beta_\mathrm{cross} = \frac{3}{(1+2s)(1-s)} \approx 2.74564
  \end{equation}
  where \(s\) is the unique root in \((0,1)\) of
  \begin{equation}
    \frac{3x}{(1+2x)(1-x)} - \log\frac{1+2x}{1-x}.
  \end{equation}
\end{lemma}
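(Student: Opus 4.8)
The plan is to follow the hint in the surrounding text: fix the \emph{zero field}, i.e.\ the uniform a-priori measure $\alpha=(\tfrac13,\tfrac13,\tfrac13)$, restrict the free energy to a symmetry axis, and locate the inverse temperature at which a new symmetric pair of stationary points is born.

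First I would restrict attention to the axis $\nu_1=\nu_2$, which in the $(x,y)$-coordinates \eqref{eq:xy-coords} is the line $x=0$. By the $S_3$-symmetry of $f_{\beta,\alpha}$ in zero field the gradient at such a point is tangent to the axis, so a point on the axis is stationary for $f_{\beta,\alpha}$ on all of $\Delta^2$ if and only if it is stationary along the axis. Using the coordinate representation of the catastrophe map, the condition $\chi_\beta(\nu)=(\tfrac13,\tfrac13,\tfrac13)$ (equivalently $u=v=0$ in \eqref{eq:uv-coords}) reduces at $x=0$ to the single scalar equation
\[
  \log\frac{1-y}{1+2y}+\beta y=0 .
\]
Besides the trivial root $y=0$ (the uniform distribution, always stationary), the nontrivial symmetric stationary points are exactly the solutions of $\beta=g(y)$, where I set $g(y):=\tfrac1y\log\tfrac{1+2y}{1-y}$ for $y\in(0,1)$.

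Next I would interpret the birth of the symmetric corner minima as a fold (saddle--node) on the axis: the pair of solutions of $\beta=g(y)$ appears precisely when the horizontal line of height $\beta$ becomes tangent to the graph of $g$, so that $\beta_{\mathrm{cross}}=\min_{y\in(0,1)}g(y)$. A short expansion gives $g(y)=3-\tfrac32 y+O(y^2)$, hence $g(0^+)=3$ with $g$ initially decreasing, while $g(1^-)=+\infty$; thus $g$ attains an interior minimum. Writing $L(y)=\log\tfrac{1+2y}{1-y}$ one computes $L'(y)=\tfrac{3}{(1+2y)(1-y)}$, so the critical points of $g=L/y$ solve $yL'(y)=L(y)$, that is
\[
  \frac{3y}{(1+2y)(1-y)}-\log\frac{1+2y}{1-y}=0,
\]
which is the defining equation for $s$. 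Substituting $L(s)=\tfrac{3s}{(1+2s)(1-s)}$ into $g(s)=L(s)/s$ then yields $\beta_{\mathrm{cross}}=g(s)=\tfrac{3}{(1+2s)(1-s)}$, as claimed.

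It remains to show that $s$ is the \emph{unique} root in $(0,1)$, equivalently that $g$ has a single interior critical point; this is the only genuinely non-routine step. I would argue via the sign of $\phi(y):=yL'(y)-L(y)$, whose zeros are exactly the critical points of $g$. A direct differentiation gives the clean factorization
\[
  \phi'(y)=\frac{3y(4y-1)}{\bigl[(1+2y)(1-y)\bigr]^2},
\]
so $\phi$ decreases on $(0,\tfrac14)$ and increases on $(\tfrac14,1)$. Since $\phi(0)=0$ and $\phi(y)\to+\infty$ as $y\to1^-$, the function $\phi$ is strictly negative on $(0,\tfrac14]$ and strictly increasing to $+\infty$ thereafter, hence vanishes exactly once, at a point $s\in(\tfrac14,1)$. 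This yields both existence and uniqueness of $s$ and shows that the interior critical point of $g$ is its global minimum, completing the identification of $\beta_{\mathrm{cross}}$.
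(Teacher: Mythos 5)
Your proof is correct and takes essentially the same approach as the paper: the paper characterizes $\beta_{\mathrm{cross}}$ by the simultaneous vanishing of the first and second derivatives of $f_{\beta,\alpha}$ restricted to the symmetry axis and then eliminates $\beta$, which yields exactly your function $\phi$ (the paper's $g$), whose unique root in $(0,1)$ is established by the same monotonicity analysis with the single interior critical point at $\tfrac14$. Your reformulation of the fold as tangency of the level $\beta$ to the graph of $y\mapsto L(y)/y$ is an equivalent repackaging of those two conditions, and you merely add the explicit factorization of $\phi'$ that the paper asserts without computation.
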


\begin{proof}
  Let \(\gamma(s) = \frac{1}{3}(1+2s,1-s,1-s)\) for \(0 \le s < 1\)
  and let \(\alpha\) be the uniform distribution.  The inverse
  crossing temperature \(\beta_{\mathrm{cross}}\) equals the \(\beta\)
  such that the two outer local extrema of
  \(f_{\beta, \alpha}\circ\gamma\) annihilate.  This is characterized
  by the two equations (first and second derivatives of
  \(f_{\beta, \alpha}\circ\gamma\))
  \begin{align}
    \label{eq:fold-1st-deriv}
    \beta s - \log\frac{1+2s}{1-s} &= 0 \\
    \label{eq:fold-2nd-deriv}
    \beta - \frac{3}{(1+2s)(1-s)} &= 0.
  \end{align}

  Plugging \eqref{eq:fold-2nd-deriv} into \eqref{eq:fold-1st-deriv}
  motivates the following definition: Let \(g\) be given on \((0,1)\)
  by
  \begin{equation}
    g(x) = \frac{3x}{(1+2x)(1-x)} - \log\frac{1+2x}{1-x}.
  \end{equation}
  The first derivative of this function vanishes in the open interval
  \((0, 1)\) exactly at \(x = \frac{1}{4}\), it is decreasing on
  \(\left(0, \frac{1}{4}\right)\), increasing on
  \(\left(\frac{1}{4}, 1\right)\) and \(\lim_{x\to 0} g(x) = 0\).
  Therefore \(g\) has a unique root in \((0,1)\).
\end{proof}

\subsubsection{The triangle-touch temperature}
\label{sec:triangle-touch-temp}

The \emph{triangle-touch temperature} is defined as the temperature
\(1/\beta_{\mathrm{touch}}\) such that in the respective
constant-temperature slice the vertices of the central triangle touch
the fold lines.  By definition,
\(\frac{8}{3} < \beta_{\mathrm{touch}} < 3\).

\begin{lemma}
  The inverse triangle-touch temperature \(\beta_{\mathrm{touch}}\) is
  the unique zero in \([\frac{8}{3}, 3]\) of
  \begin{equation}
    2\operatorname{artanh}\sqrt{1 - \frac{8}{3x}}
    + \frac{3x}{4}\left( 1 - \sqrt{1 - \frac{8}{3x}} \right)
    - \log\left(\frac{x - 2}{2}\right) - 3.
  \end{equation}
\end{lemma}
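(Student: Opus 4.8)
The plan is to translate the geometric touching condition into an explicit coincidence of two points in field space and then eliminate everything except the temperature. By \cref{thm:const-temp-slices} the central triangle is the symmetrized image under $\chi_\beta$ of the middle interval of $D_\beta$, whose endpoints $\nu_1 = 1-\tfrac2\beta$ and $\nu_1 = x_- := \tfrac12 - \tfrac12\sqrt{1-\tfrac8{3\beta}}$ are two of the discriminant roots of \cref{lem:discriminant}. At any such root the two branches of $\gamma_\beta$ coincide, forcing $\nu_2=\nu_3$, so the relevant triangle vertex is the symmetric degenerate point $\nu^{\mathrm v} = (x_-, \tfrac{1-x_-}2, \tfrac{1-x_-}2)$, which under $\chi_\beta$ lands on the reflection axis $\alpha_2=\alpha_3$, i.e. on the line $v=0$ of the coordinates \eqref{eq:uv-coords}. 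First I would record its image: writing $\phi(t)=\log t - \beta t$, the vertex has $v=0$ and $u = \phi(x_-)-\phi(\tfrac{1-x_-}2)$, which evaluates in closed form to $\log2 - 2\operatorname{artanh}\sqrt{1-\tfrac8{3\beta}} - \tfrac\beta4\bigl(1-3\sqrt{1-\tfrac8{3\beta}}\bigr)$. This is where the $\operatorname{artanh}$ term and the term $\tfrac{3\beta}4(1-\sqrt{\cdots})$ of the statement originate.

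Next I would describe the fold line meeting the vertex. A fold line is the image under $\chi_\beta$ of one of the unbounded branches of $\gamma_\beta$; by the reflection symmetry the branch relevant here crosses the axis $v=0$ at a reflection-symmetric pair of stationary points, that is, at a degenerate point $\nu^{\mathrm f}$ whose two equal-field coordinates $a\neq b$ are the two $\phi$-preimages $\phi(a)=\phi(b)$ and which also satisfies the degeneracy condition \eqref{eq:deg-cond}. The touching occurs exactly when $\chi_\beta(\nu^{\mathrm f}) = \chi_\beta(\nu^{\mathrm v})$; since both points already lie on $v=0$, this reduces to the single scalar equation $\phi(1-a-b)-\phi(b) = u$. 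I would solve the pair given by $\phi(a)=\phi(b)$ together with \eqref{eq:deg-cond} for $(a,b)$ as functions of $\beta$; at the touch the longitudinal coordinate is governed by the other discriminant root $1-\tfrac2\beta$ (equivalently $a,b$ are controlled by the maximizer $t=\tfrac1\beta$ of $\phi$), which is the source of the remaining terms $\log\tfrac{\beta-2}2$ and $-3$. Substituting back and clearing the common normalization turns the coincidence into $F(\beta)=0$ for the stated $F$.

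I would structure this elimination in the same spirit as the crossing-temperature lemma: extract two scalar equations — a stationarity equation fixing the field produced by the vertex, and the degeneracy equation \eqref{eq:deg-cond} enforcing that this same field still carries a fold — and eliminate the auxiliary unknown between them. The main obstacle is exactly this elimination: unlike the crossing temperature, the two matched points lie on different $S_3$-branches and the matching is transcendental through $\phi$, so the bookkeeping that makes every non-closed-form contribution cancel and leaves precisely $2\operatorname{artanh}\sqrt{1-\tfrac8{3\beta}}+\tfrac{3\beta}4\bigl(1-\sqrt{1-\tfrac8{3\beta}}\bigr)-\log\tfrac{\beta-2}2-3$ is delicate and is the heart of the argument.

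Finally, uniqueness on $[\tfrac83,3]$ is the easy part. At the endpoints $w:=\sqrt{1-\tfrac8{3\beta}}$ equals $0$ and $\tfrac13$, giving $F(\tfrac83)=2+\log3-3=\log3-1>0$ and $F(3)=2\log2+\tfrac32-3<0$, so a zero exists by the intermediate value theorem. I would then show $F$ is strictly decreasing on the interval by checking $F'(\beta)<0$, weighing $\tfrac{d}{d\beta}\operatorname{artanh}w$ against the linear term, to conclude the zero is unique; this also confirms $\tfrac83<\beta_{\mathrm{touch}}<3$ as required by the definition.
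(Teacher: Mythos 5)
Your overall skeleton — put both matched features on a symmetry axis, write each in closed form, equate, then get existence and uniqueness of the zero from endpoint signs plus monotonicity — is the same as the paper's, but your identification of the two matched points is wrong, and the elimination you sketch cannot produce the stated function. The vertex of the central triangle is \emph{not} the image of \(\nu^{\mathrm v}=(x_-,\tfrac{1-x_-}{2},\tfrac{1-x_-}{2})\) with \(x_-=\tfrac12-\tfrac12\sqrt{1-\tfrac{8}{3\beta}}\); that symmetric point maps to the midpoint of the \emph{opposite} edge of the triangle, on the other half of the symmetry axis (numerically, for \(\beta\) between \(\beta_{\mathrm{cross}}\) and \(\beta_{\mathrm{touch}}\) its \(u\)-coordinate is positive while the vertex and the fold centre sit at negative \(u\)). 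The vertex is the image of the \emph{other} endpoint of the middle interval of \(D_\beta\), namely \(\nu=(1-\tfrac2\beta,\tfrac1\beta,\tfrac1\beta)\), whose on-axis coordinate is \(\log(\beta-2)+3-\beta\) — this is the actual source of the terms \(\log\tfrac{\beta-2}{2}\) and \(-3\). Similarly, the centre of the fold line is not an asymmetric degenerate point with \(\phi(a)=\phi(b)\), \(a\neq b\) (such a preimage would produce a transversal self-intersection of the bifurcation set, not the centre of a single fold); it is the image of the \emph{symmetric} degenerate point \(\nu_1=x_+=\tfrac12+\tfrac12\sqrt{1-\tfrac{8}{3\beta}}\), the lower endpoint of the third interval of \(D_\beta\), with on-axis coordinate \(\log 2+2\operatorname{artanh}w-\tfrac\beta4-\tfrac{3\beta}{4}w\), \(w=\sqrt{1-\tfrac{8}{3\beta}}\). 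Subtracting the two explicit coordinates gives exactly the stated function; no transcendental elimination through \(\phi\) is needed, which is precisely why the paper's proof is short.

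One can see concretely that your route fails: your closed form for the \(x_-\) image, \(\log 2-2\operatorname{artanh}w-\tfrac\beta4(1-3w)\), is algebraically correct for that point, but the signs of its \(w\)-dependent terms are opposite to those in the statement (\(+2\operatorname{artanh}w\) and \(\tfrac{3\beta}{4}(1-w)\)), so those terms cannot \enquote{originate} there. Worse, equating the only two closed forms your setup provides — the \(x_-\) image and the \(1-\tfrac2\beta\) image — yields \(-2\operatorname{artanh}w+\tfrac{3\beta}{4}(1+w)-\log\tfrac{\beta-2}{2}-3=0\), whose zero at the right end of the interval is \(\beta=3\): the two preimages \(x_-\) and \(1-\tfrac2\beta\) merge there (Lemma~\ref{lem:discriminant}), i.e.\ you have written down the condition that the triangle degenerates to the elliptic umbilic point, not that its corner touches a fold line. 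On the positive side, your uniqueness argument is correct and in fact repairs sign typos in the paper's own proof: the endpoint values are \(F(\tfrac83)=\log 3-1>0\) and \(F(3)=2\log 2-\tfrac32<0\) (the paper states these with flipped signs and claims the function is increasing), and one computes \(F'(\beta)=\tfrac34(1-w)-\tfrac{1}{\beta-2}<0\) on \([\tfrac83,3]\) — the derivative of \(2\operatorname{artanh}w\) cancels exactly against the square-root contribution of the middle term — so \(F\) is strictly decreasing and the zero is unique.
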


\begin{proof}
  First, observe that the function is strictly increasing, positive
  for \(x = \frac{8}{3}\) and negative for \(x = 3\).  The function
  values are \(1 - \log 3\) and \(\frac{3}{2} - \log 4\) respectively.
  Therefore this function has a unique zero in the specified interval.

  It suffices to show that one vertex of the central triangle and one
  of the fold lines meet because of symmetry.  Since the vertex lies
  on an axis of symmetry of the simplex for all
  \(\frac{8}{3} < \beta < 3\), we know that the intersection point
  with the fold line must also lie on the axis of symmetry (the centre
  of the fold line).  For the space of a-priori measures \(\alpha\) we
  use the following coordinates:
  \begin{equation}
    \label{eq:sym-fields-coords}
    \begin{split}
      p &= \sqrt{3} \log \frac{\alpha_1}{\alpha_2}\\
      q &= \log \frac{\alpha_1 \alpha_2}{\alpha_3^2}
    \end{split}
  \end{equation}
  The vertex of the triangle fulfills
  \(p = \sqrt{3}(\log(\beta - 2) + 3 - \beta)\) and the centre of the
  fold line has
  \(p = \sqrt{3}\left( \log 2 - \frac{1}{4}\beta - \frac{3\beta}{4}
    \sqrt{1 - \frac{8}{3\beta}} + 2\operatorname{artanh}\sqrt{1 -
      \frac{8}{3\beta}} \right)\).  Equating the two formulas proves
  the claim.  The values of the \(u\)-coordinate can be calculated
  from the respective degenerate stationary points (see
  \cref{fig:triangle_touch_explanation}).  The \(\nu_1\) values are
  the lower bounds of the domains and
  \(\nu_2 = \frac{1}{2}(1 - \nu_1)\).
\end{proof}

\begin{figure}
  \centering
  \includegraphics{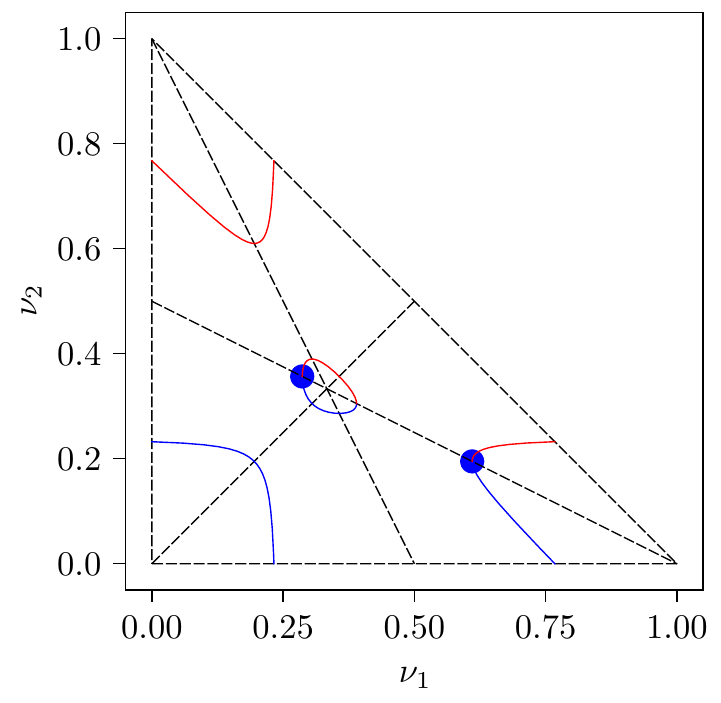}
  \caption{The figure shows the two degenerate stationary points in
    \(\nu\)-space that are mapped to a touching point of the inner
    triangle with of a fold line in \(\alpha\)-space under the
    catastrophe map \(\chi_\beta\).}
  \label{fig:triangle_touch_explanation}
\end{figure}

\subsubsection{The elliptic umbilic temperature}

We know from singularity theory that the elliptic umbilic is a doubly
degenerate point, that is, a point where the Hessian has two zero
eigenvalues.  There is only one such point for this potential and it
is given by \(\beta = 3\): The vanishing mixed second-order partial
derivatives of the potential implies \(\beta = \frac{1}{\nu_3}\).
Plugging this into
\[\frac{\partial^2 f_{\beta,\alpha}}{\partial \nu_i^2} = 2\beta -
  \frac{1}{\nu_i} - \frac{1}{\nu_3} = 0\] yields
\(\frac{1}{\nu_i} = \beta\).  These partial derivatives vanish
therefore only for \(\beta = 3\) and \(\nu = \frac{1}{3}(1,1,1)\) in
zero magnetic field.  Furthermore, the third-order Taylor expansion of
the potential \(f_{\beta, \alpha}\) for \(\beta = 3\) and zero
magnetic field in the \((x, y)\)-coordinates (\ref{eq:xy-coords}) is
given by \(-\frac{1}{3}y^3 + x^2y + \mathrm{const}\).  This is exactly
the germ of the elliptic umbilic from Thom's seven elementary
catastrophes.

\subsection{A parametric representation of the bifurcation set}

As we have learned in the previous subsections, the extended phase
diagram is constructed via the bifurcation set.  In this subsection we
present the parametric representation which was used to create
Figure~\ref{fig:3d-bifu-set} and how it is obtained.

\begin{theorem}
  The bifurcation set is given by the union
  \(F^+(\Delta^2)\cup F^-(\Delta^2)\) of the images of the two maps
  \(F^{\pm}\) from \(\Delta^2\) to the parameter space
  \((0,\infty) \times \Delta^2\) with components
  \begin{align}
    F^{\pm}_1(\nu) &= \frac{1}{3}\sum_{i=1}^3 \frac{1}{\nu_i} \pm \sqrt{\frac{1}{9}\left(\sum_{i=1}^3 \frac{1}{\nu_i}\right)^2 - \frac{1}{3} \sum_{i<j} \frac{1}{\nu_i \nu_j}} \\
    F^{\pm}_2(\nu) &= \chi_{F_1^{\pm}(\nu)}(\nu) = \left( \frac{\nu_i e^{-F^{\pm}_1(\nu) \nu_i}}{\sum_{k=1}^q \nu_k e^{-F^{\pm}_1(\nu) \nu_k}} \right)_{i=1}^3
  \end{align}
\end{theorem}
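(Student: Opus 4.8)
The idea is to reverse the roles of the parameters and the stationary point: rather than fixing \((\beta,\alpha)\) and searching for a degenerate stationary point, I fix an interior point \(\nu\) of \(\Delta^2\) and determine all \((\beta,\alpha)\) for which \(f_{\beta,\alpha}\) has a degenerate stationary point precisely at \(\nu\). From the derivation of the catastrophe map, \(\nu\) is a stationary point of \(f_{\beta,\alpha}\) if and only if \(\alpha=\chi_\beta(\nu)\). Hence the only remaining freedom is the scalar \(\beta\), which is pinned down by the degeneracy condition of \cref{lem:deg-condition}. The whole bifurcation set is therefore the image of the assignment \(\nu\mapsto(\beta,\chi_\beta(\nu))\) as \(\beta\) ranges over the degenerate values, and the claim is that these degenerate values are exactly \(F_1^\pm(\nu)\) and that \(\chi_{F_1^\pm(\nu)}(\nu)=F_2^\pm(\nu)\).

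First I would rewrite \cref{eq:deg-cond} as a quadratic in \(\beta\) alone. Dividing by \(\nu_1\nu_2\nu_3>0\) and using \(\sum_i\nu_i=1\) to obtain the identities \(\tfrac{\nu_1\nu_2+\nu_2\nu_3+\nu_3\nu_1}{\nu_1\nu_2\nu_3}=\sum_i\tfrac{1}{\nu_i}\) and \(\tfrac{1}{\nu_1\nu_2\nu_3}=\sum_{i<j}\tfrac{1}{\nu_i\nu_j}\), the degeneracy condition becomes
\begin{equation}
  3\beta^2-2\beta\sum_{i=1}^3\frac{1}{\nu_i}+\sum_{i<j}\frac{1}{\nu_i\nu_j}=0.
\end{equation}
The quadratic formula then yields precisely the two branches \(F_1^\pm(\nu)\).

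It then remains to check admissibility and to confirm that the resulting points really belong to the bifurcation set. Reality of \(F_1^\pm(\nu)\) amounts to nonnegativity of the radicand; writing \(a=\tfrac{1}{\nu_1},b=\tfrac{1}{\nu_2},c=\tfrac{1}{\nu_3}\) one has \(\bigl(\sum_i\tfrac{1}{\nu_i}\bigr)^2-3\sum_{i<j}\tfrac{1}{\nu_i\nu_j}=a^2+b^2+c^2-(ab+bc+ca)=\tfrac12\bigl[(a-b)^2+(b-c)^2+(c-a)^2\bigr]\ge 0\), with equality exactly at the barycenter \(\nu=\tfrac13(1,1,1)\), where the two branches merge at \(\beta=3\) (the elliptic umbilic). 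Both roots are moreover positive, since their product \(\tfrac13\sum_{i<j}\tfrac{1}{\nu_i\nu_j}\) and their sum \(\tfrac23\sum_i\tfrac{1}{\nu_i}\) are positive. Finally, for \(\beta=F_1^\pm(\nu)\) and \(\alpha=\chi_{F_1^\pm(\nu)}(\nu)=F_2^\pm(\nu)\) the point \(\nu\) is a stationary point by construction of \(\chi_\beta\), and it is degenerate because \(\beta\) solves the degeneracy condition; hence \((F_1^\pm(\nu),F_2^\pm(\nu))\) lies in the bifurcation set.

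The two inclusions now close the argument. For one direction, the previous paragraph shows \(F^+(\Delta^2)\cup F^-(\Delta^2)\) is contained in the bifurcation set. Conversely, if \((\beta,\alpha)\) is in the bifurcation set it has a degenerate stationary point \(\nu\), which is necessarily interior since the entropy term \(\sum_i\nu_i\log(\nu_i/\alpha_i)\) forces stationary points into the interior; then \(\alpha=\chi_\beta(\nu)\) and \(\beta\) solves the quadratic, so \(\beta=F_1^{+}(\nu)\) or \(\beta=F_1^{-}(\nu)\) and \((\beta,\alpha)=F^\pm(\nu)\). I do not expect a genuine obstacle: the only delicate points are the bookkeeping of the symmetric-function identities, which hold by virtue of \(\sum_i\nu_i=1\), and the observation that \(F^\pm\) are naturally defined on the open simplex only, so the statement is to be read with \(\nu\) ranging over the relative interior of \(\Delta^2\) (the boundary contributes no stationary points and hence nothing to the bifurcation set).
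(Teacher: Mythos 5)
Your proposal is correct and takes essentially the same route as the paper: you read the degeneracy condition of \cref{lem:deg-condition} as a quadratic in \(\beta\) for fixed interior \(\nu\), obtain the two roots \(F_1^\pm(\nu)\) via the quadratic formula, and recover \(\alpha\) through the catastrophe map \(\chi_\beta\). Your sub-verifications differ only cosmetically from the paper's (a sum-of-squares identity for the radicand and Vieta's formulas for positivity of both roots, where the paper argues via the sign of the discriminant on the simplex and a square-root inequality), and your explicit handling of the converse inclusion and of interiority of stationary points fills in steps the paper leaves implicit.
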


\begin{proof}
  First, let us check that \(F^+\) and \(F^-\) map \(\Delta^2\) into
  \((0, \infty) \times \Delta^2\).  Clearly, \(F^{\pm}_2(\nu)\) is
  an element of \(\Delta^2\).  Furthermore, \(F^+_1(\nu)\) is
  obviously positive.  Since every \(\nu_i\) is positive, we have
  \begin{equation}
    \sqrt{
      \frac{1}{9}\left(\sum_{i=1}^3 \frac{1}{\nu_i}\right)^2
      - \frac{1}{3} \sum_{i<j} \frac{1}{\nu_i \nu_j}
    } < \frac{1}{3}\sum_{i=1}^3 \frac{1}{\nu_i}
  \end{equation}
  which implies that \(F^-_1(\nu)\) is also positive.  A point
  \((\beta, \alpha)\) belongs to the bifurcation set if and only there
  exists a degenerate stationary point \(\nu\) in the unit simplex
  such that \(\alpha = \chi_{\beta}(\nu)\).  \Cref{lem:deg-condition}
  shows us that the degeneracy condition~\eqref{eq:deg-cond} is a
  quadratic equation but this time considered as a function of
  \(\beta\).  Note also that it is independent of \(\alpha\).  The
  discriminant is given by
  \begin{align}
    4 (\nu_1 \nu_2 +\nu_2 \nu_3 +\nu_3 \nu_1)^2 - 12 \nu_1 \nu_2 \nu_3
    &= 4((\nu_1 \nu_2 +\nu_2 \nu_3 +\nu_3 \nu_1)^2 - 3 \nu_1 \nu_2 \nu_3)
    \\
    &= 4((\nu_1 \nu_2)^2 +(\nu_2 \nu_3)^2 - \nu_1 \nu_2 \nu_3 + (\nu_3 \nu_1)^2).
  \end{align}
  On the boundary of the simplex, the discriminant is positive except
  at the vertices.  We see by calculus that it achieves its minimal
  value at the centre of the simplex where it takes the value zero.
  Therefore the quadratic equation for \(\beta\) has the two solutions
  \(F^\pm_1(\nu)\) for every \(\nu\).

  Now that we know the two possible \(\beta\)-values that make \(\nu\)
  fulfill the degeneracy condition we can use the catastrophe map
  \(\chi_{\beta}\) to obtain the respective a-priori measures
  \(\alpha\) to make them degenerate stationary points.  This proves
  the claim.
\end{proof}

\section{The stable phase diagram}
\label{sec:classical-diagram}

The classical phase diagram is a partition of the parameter space.
However, in contrast to the metastable phase diagram, the cells of this
partition contain \((\beta, \alpha)\) such that the number of
\emph{global} minimizers stays constant inside the cell.  This type of
phase diagram does not see the fine bifurcation behaviour of the rate
function and is therefore much simpler to describe.  We can think of
the classical phase diagram as given by a “surface” in the parameter
space: the coexistence surface.  On this surface we have a coexistence
of at least two phases.  It is therefore clear that the coexistence
surface lies in those cells of the metastable phase diagram in which the
rate function has at least two minimizers.  The complement of the
coexistence surface defines a region of the parameter space in which
the rate function has a unique global minimum.

The surface is best understood by moving in the direction of
increasing \(\beta\) (see
\cref{fig:maxwell-rockets,fig:maxwell-pentagram,fig:maxwell-beak-ew,fig:maxwell-beyond}).
Leaving the high temperature regime (\(\beta \le 2\)), it consists of
three lines on the axis of symmetry.  These lines have progeny, namely
two lines emerging at a positive angle.  This results in three
Y-shaped sets composed of curved and straight lines.  Furthermore,
each of the offspring lines of the Y-shaped curves meet during the
beak-to-beak scenario we have already seen in the extended phase
diagram and form a triangle.  Finally, the triangle shrinks to a point
and we see a star-shaped set consisting of three straight lines (the
axes of symmetry).  On the coexistence surface the rate function has
at least two and at most four global minimizers.  The point of
coexistence of four phases is the well-known Ellis-Wang point
\parencite{elliswang90}.  For \(2 < \beta \le \frac{18}{7}\) it is
only possible to have a coexistence of two phases.  However, starting
with \(\beta\) larger than \(\frac{18}{7}\) we find so-called triple
points (coexistence of three phases).  These will be important for our
numerical computation of the Maxwell set (coexistence surface).  Let
us summarize our result:

\begin{theorem}
  Let us use the \((x, y)\)-coordinates~\eqref{eq:xy-coords} for the
  \(\alpha\)-simplex.
  \begin{enumerate}
  \item For \(\beta \le 2\) the rate function has a unique global
    minimum for any \(\alpha\).
  \item For \(2 < \beta \le \frac{18}{7}\) the rate function has
    precisely two global minimizers if \(\alpha\) lies in the segment
    \(\{0\} \times \left(-\frac{1}{2}, -\frac{1 - (\beta -
        2)e^{3-\beta}}{2 + (\beta - 2)e^{3-\beta}} \right)\) or its
    images under the permutation group.  For any other \(\alpha\) the
    rate function has a unique global minimum (see
    \cref{fig:maxwell-rockets}).
  \item For \(\frac{18}{7} < \beta < 4\log 2\) the rate function has
    precisely two global minimizers if \(\alpha\) lies in the segment
    \( \{0\} \times (-\frac{1}{2}, y_{\mathrm{triple}}(\beta))\) or if
    \(\alpha\) lies on the curve that is a solution to the initial
    value problem \eqref{eq:maxwell-ivp} or its images on the
    permutation group.  Here, \(y_{\mathrm{triple}}(\beta)\) is the
    \(y\)-coordinate of the triple point.  If \(\alpha\) is a
    triple point, the rate function has three global minimizers.
    For any other \(\alpha\) it has a unique global minimum. (see
    \cref{fig:maxwell-pentagram,fig:maxwell-beak-ew}).
  \item For \(\beta \ge 4\log 2\) the rate function has two global
    minimizers for any \(\alpha\) on the segment
    \((-\frac{1}{2}, 0)\times \{0\}\) or its images under the
    permutation group.  It has four global minimizers if
    \(\beta = 4\log 2\) and three global minimizers if
    \(\beta > 4\log 2\) for
    \(\alpha = (\frac{1}{3}, \frac{1}{3}, \frac{1}{3})\).  For any
    other \(\alpha\) it has a unique global minimum (see
    \cref{fig:maxwell-beyond}).
  \end{enumerate}
\end{theorem}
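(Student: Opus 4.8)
The plan is to reduce the determination of the coexistence (Maxwell) set to two ingredients that are already under control: the number and approximate location of the local minima, supplied by the metastable phase diagram together with the catastrophe map \(\chi_\beta\), and the comparison of the values \(f_{\beta,\alpha}\) takes at those minima. Since a global minimizer is in particular a local minimizer, coexistence of global minimizers can occur only where two or more of the finitely many local minima are tied for the lowest value. The organizing principle throughout is the \(S_3\)-symmetry: for a permutation \(\tau\) one has \(f_{\beta,\tau\alpha}(\tau\nu)=f_{\beta,\alpha}(\nu)\), so whenever \(\alpha\) is fixed by a transposition the set of global minimizers is invariant under that transposition, and a minimizer lying off the corresponding axis is automatically accompanied by a second, equal-valued one. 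This manufactures coexistence along the three axes of symmetry and reduces the remaining work to (a) locating the sub-segments of the axes on which the symmetric pair is actually global, and (b) the genuinely asymmetric coincidences.

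For part (1) I would prove that \(f_{\beta,\alpha}\) is strictly convex on \(\Delta^2\) for \(\beta\le 2\), which forces a unique minimizer for every \(\alpha\). Using the Hessian matrix computed in the proof of the degeneracy lemma, its determinant equals the degeneracy polynomial \(3\nu_1\nu_2\nu_3\beta^2-2(\nu_1\nu_2+\nu_2\nu_3+\nu_3\nu_1)\beta+1\) divided by \(\nu_1\nu_2\nu_3>0\); by the classification of \(D_\beta\) this polynomial has no zero in the simplex when \(\beta\le 2\), it is positive at the barycentre, hence positive throughout by connectedness, while the trace \(\tfrac1{\nu_1}+\tfrac1{\nu_2}+\tfrac2{\nu_3}-4\beta\ge\sum_i\tfrac1{\nu_i}-4\beta\ge 9-4\beta>0\) by the AM--HM inequality. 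Positive trace and determinant give a positive definite Hessian everywhere, hence convexity and uniqueness.

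For parts (2)--(4) the symmetric coexistence lives on the axis \(\alpha_1=\alpha_2\) and its images: there the reflection exchanges the two minima near the corners \(e_1,e_2\), which are therefore tied. The lower endpoint of each segment is the open boundary \(\alpha_3\to 0\) (an effective two-state model that is ordered for all \(\beta>2\)); the upper endpoints differ by regime and are found by restricting \(f_{\beta,\alpha}\) to the axis, exactly as in the crossing-temperature computation, and comparing values. In regime (2) the symmetric pair merges in a pitchfork at the on-axis degenerate stationary point \(\nu=(\tfrac1\beta,\tfrac1\beta,1-\tfrac2\beta)\) (whose last coordinate is the endpoint \(1-\tfrac2\beta\) of \(D_\beta\)); feeding this point through \(\chi_\beta\) and converting to the \((x,y)\)-coordinates yields precisely the stated endpoint \(-\tfrac{1-(\beta-2)e^{3-\beta}}{2+(\beta-2)e^{3-\beta}}\). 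In regime (3) the upper endpoint is the triple point \(y_{\mathrm{triple}}(\beta)\), where the central minimum joins the symmetric pair in value; in regime (4) the symmetric pair stays global all the way to the barycentre, where at \(\beta=4\log2\) four minima (the Ellis--Wang point) and for \(\beta>4\log2\) three minima coincide by the full symmetry.

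The asymmetric coexistence curve of part (3) I would obtain as a level set traced by an ODE. Writing \(\nu^A(\alpha),\nu^B(\alpha)\) for two minima varying smoothly with \(\alpha\) and using stationarity to kill the \(\nu\)-derivatives, the gradient of \(\Phi(\alpha)=f_{\beta,\alpha}(\nu^A)-f_{\beta,\alpha}(\nu^B)\) is simply \(\bigl(\tfrac{\nu^B_i-\nu^A_i}{\alpha_i}\bigr)_{i=1}^3\); the Maxwell set \(\{\Phi=0\}\) is then the integral curve whose tangent is orthogonal to this gradient in the simplex, started at the triple point. This is the initial value problem referenced in the statement. The main obstacle is completeness together with the global ordering of values: the metastable diagram bounds how many local minima exist but does not by itself identify the lowest one, so I must show that away from the listed axes and integral curve the global minimizer is unique, that the listed minima really are the global ones on the claimed loci, and that the traced curve develops no spurious extra coincidences. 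I expect to control this by monotonicity of the relevant value differences along rays emanating from the axes, combined with the exact count of minima per cell from the bifurcation analysis, which pins the triple and Ellis--Wang points down as the only multiple coincidences.
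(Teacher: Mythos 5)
Your overall architecture matches the paper's: coexistence is manufactured on the three symmetry axes by the \(S_3\)-action, the endpoint of the axis segment in regime (2) is the cusp point obtained by pushing the degenerate stationary point \(\nu=(1/\beta,1/\beta,1-2/\beta)\) (the endpoint \(1-\tfrac{2}{\beta}\) of \(D_\beta\)) through \(\chi_\beta\) --- your computation does reproduce the stated value \(-\tfrac{1-(\beta-2)e^{3-\beta}}{2+(\beta-2)e^{3-\beta}}\) --- the triple point serves as initial condition, and your level-set ODE for \(\Phi(\alpha)=f_{\beta,\alpha}(\nu^A)-f_{\beta,\alpha}(\nu^B)\), with the envelope-theorem gradient \(\bigl((\nu^B_i-\nu^A_i)/\alpha_i\bigr)_i\), is exactly the initial value problem \eqref{eq:maxwell-ivp} in different coordinates. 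Where you genuinely depart from the paper is part (1): the paper disposes of \(\beta\le 2\) with the one-line remark that the bifurcation set lies in \(\{\beta>2\}\), whereas you prove strict convexity directly. Your computation is correct: the Hessian determinant equals the degeneracy polynomial \eqref{eq:deg-cond} divided by \(\nu_1\nu_2\nu_3\), which has no zero in the open simplex for \(\beta\le2\) and is positive at the barycentre (value \((\beta/3-1)^2\)), and the trace bound \(\tfrac{1}{\nu_1}+\tfrac{1}{\nu_2}+\tfrac{2}{\nu_3}-4\beta\ge 9-4\beta\ge 1\) follows from AM--HM. This is more self-contained than the paper's treatment and is a genuine improvement in rigor for that regime.

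The genuine gap is precisely the step you defer to ``monotonicity of the relevant value differences along rays emanating from the axes.'' This is where the paper invests an actual proof, namely the Tilting Lemma (\cref{lem:tilting}): if \(\alpha_i>\alpha_j\), then every \emph{global} minimizer satisfies \(\nu_i>\nu_j\), proved by the two-line exchange argument \(f_{\beta,\alpha}(\nu')-f_{\beta,\alpha}(\nu)=(\nu_i-\nu_j)(\log\alpha_i-\log\alpha_j)\ge 0\) for the coordinate-swapped \(\nu'\) (the quadratic and entropy terms are permutation-invariant), with the equality case \(\nu_i=\nu_j\) excluded by evaluating \(df_{\beta,\alpha}\) on the tangent vector \(\partial_{\nu_j}-\partial_{\nu_i}\). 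Without this lemma your symmetry argument only produces coexistence \emph{on} the axes; it does not show that two global minimizers lying in \emph{different} fundamental cells cannot be tied at an off-axis \(\alpha\), so the ``for any other \(\alpha\) the rate function has a unique global minimum'' claims in parts (2)--(4) remain unsupported, as does the location of the triple point on the axis. The lemma also feeds the uniqueness of the triple point: restricting to the axis, the depth difference has derivative \(g'(u)=\mu_1(u)-\nu_1(u)>0\) (the paper's triple-point proposition --- note this is the same envelope computation you already perform for the ODE), so strict monotonicity pins down exactly one triple point per \(\beta\). Your ray-monotonicity hope is essentially this exchange argument in disguise, so the gap is fillable in a few lines, but as submitted the proposal asserts rather than proves the confinement of different-cell coexistence to the symmetry axes; the residual same-cell ties are then correctly captured by your integral curve, as in the paper.
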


Now it is clear from the previous section that for \(\beta \le 2\) we
do not see multiple global minimizers because points in the
bifurcation set have \(\beta > 2\) and in a high temperature regime we
have a unique global minimum.  The other three regimes are interesting
and it is useful to keep the bifurcation set in mind when analysing
these.  However, before we discuss the other regimes in detail let us
state another observation which clarifies the word \enquote{surface}
of the term \emph{coexistence surface:} Locally (except at the triple
points), this set is indeed a two-dimensional submanifold.

Suppose \((\beta, \alpha)\) is such that the rate function
\(f_{\beta, \alpha}\) has two distinct non-degenerate stationary
points \(\nu_{\beta, \alpha}\) and \(\mu_{\beta, \alpha}\).  Since the
rate function depends smoothly on its parameters, the Implicit
Function Theorem tells us that we find two smooth maps
\((\beta, \alpha) \mapsto \nu_{\beta, \alpha}\) and
\((\beta, \alpha) \mapsto \mu_{\beta, \alpha}\) that map a
neighbourhood \(U\) of \((\beta, \alpha)\) to \(\Delta^2\) such that
\(\nu_{\beta, \alpha}\) and \(\mu_{\beta, \alpha}\) are stationary
points of \(f_{\beta, \alpha}\) for every \((\beta, \alpha)\) in the
neighbourhood.

\begin{lemma}
  The set of \((\beta,\alpha)\) such that
  \(f_{\beta, \alpha}(\nu_{\beta,\alpha}) = f_{\beta,
    \alpha}(\mu_{\beta,\alpha})\) is a two-dimensional embedded
  submanifold of \((0,\infty) \times \Delta^2\).
\end{lemma}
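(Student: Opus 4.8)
The plan is to set up a smooth "free-energy difference" function whose zero set is exactly the set in question, and then verify that zero is a regular value of this function, so that the statement follows directly from the regular value theorem (the submersion/preimage theorem).

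First I would work on the open neighbourhood $U$ of $(\beta,\alpha)$ furnished by the discussion preceding the lemma, on which we have the two smooth branches of stationary points $(\beta,\alpha)\mapsto\nu_{\beta,\alpha}$ and $(\beta,\alpha)\mapsto\mu_{\beta,\alpha}$ (these exist because the stationary points are assumed non-degenerate, so the Implicit Function Theorem applies to $\nabla f_{\beta,\alpha}=0$). On $U$ I would define
\begin{equation}
  \Phi(\beta,\alpha) = f_{\beta,\alpha}(\nu_{\beta,\alpha}) - f_{\beta,\alpha}(\mu_{\beta,\alpha}),
\end{equation}
which is smooth as a composition of smooth maps. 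The set described in the lemma is, locally, precisely $\Phi^{-1}(0)$. To conclude that it is a two-dimensional embedded submanifold of the three-dimensional parameter space $(0,\infty)\times\Delta^2$, it suffices to show that $0$ is a regular value of $\Phi$, i.e. that $D\Phi(\beta,\alpha)\neq 0$ whenever $\Phi(\beta,\alpha)=0$.

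The main computation is the differential of $\Phi$. Here I would exploit that $\nu_{\beta,\alpha}$ and $\mu_{\beta,\alpha}$ are \emph{stationary} points of $f_{\beta,\alpha}$: by the envelope/chain-rule cancellation, the terms coming from differentiating the stationary point itself vanish, since $\partial_\nu f_{\beta,\alpha}(\nu_{\beta,\alpha})=0$ on the tangent space of the simplex. Hence the derivative of $f_{\beta,\alpha}(\nu_{\beta,\alpha})$ with respect to the parameters reduces to the \emph{explicit} partial derivatives of $f_{\beta,\alpha}$ in $\beta$ and in $\alpha$ evaluated at the (frozen) stationary point. Concretely, $\partial_\beta f_{\beta,\alpha}(\nu) = -\tfrac{1}{2}\langle\nu,\nu\rangle$, so the $\beta$-component of $D\Phi$ equals $-\tfrac12\big(\langle\nu_{\beta,\alpha},\nu_{\beta,\alpha}\rangle - \langle\mu_{\beta,\alpha},\mu_{\beta,\alpha}\rangle\big)$, and the $\alpha$-derivative contributes $-\sum_i(\nu_i-\mu_i)/\alpha_i$ along admissible directions. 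The hard part will be ruling out the simultaneous vanishing of \emph{all} these partial derivatives at a point where $\Phi=0$: one must show that two \emph{distinct} non-degenerate stationary points cannot have equal free energy, equal squared norm, and equal $\alpha$-weighted difference all at once.

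To handle this obstacle I would argue that $\nu_{\beta,\alpha}\neq\mu_{\beta,\alpha}$ forces $D\Phi\neq 0$: if the $\beta$-component vanished we would have $\langle\nu,\nu\rangle=\langle\mu,\mu\rangle$, and if in addition all $\alpha$-directional derivatives vanished we would obtain that $\nu$ and $\mu$ agree as elements of the simplex after weighting, which combined with the stationarity relation $\log\nu_i = \beta\nu_i + \log\alpha_i + c$ (read off from \eqref{eq:cata-map}) would force $\nu=\mu$, a contradiction. Thus $0$ is a regular value, and the regular value theorem yields that $\Phi^{-1}(0)$ is a two-dimensional embedded submanifold near $(\beta,\alpha)$. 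Since $(\beta,\alpha)$ was an arbitrary point of the coexistence set at which two non-degenerate stationary points occur, the local pieces patch to a global two-dimensional embedded submanifold of $(0,\infty)\times\Delta^2$, completing the proof.
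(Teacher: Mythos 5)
Your proposal is correct and follows essentially the same route as the paper: the paper likewise defines \(F(\beta,\alpha)=f_{\beta,\alpha}(\nu_{\beta,\alpha})-f_{\beta,\alpha}(\mu_{\beta,\alpha})\), uses stationarity to reduce the differential to the explicit parameter derivatives (obtaining the row vector \(\bigl(-\tfrac12(\|\nu_{\beta,\alpha}\|^2-\|\mu_{\beta,\alpha}\|^2),\,\mu_1-\nu_1,\,\mu_2-\nu_2\bigr)\) in \((\beta,u,v)\)-coordinates), and concludes via the Constant-Rank Level Set Theorem after noting this vanishes if and only if \(\nu_{\beta,\alpha}=\mu_{\beta,\alpha}\). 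The only cosmetic differences are that you take \(\alpha\)-directional derivatives in the simplex rather than the paper's \((u,v)\)-coordinates (where the \(\alpha\)-components alone already force \(\nu=\mu\), since \((\nu_i-\mu_i)/\alpha_i\) constant plus \(\sum_i(\nu_i-\mu_i)=0\) gives \(\nu=\mu\) directly, making your appeal to the stationarity relation and the \(\beta\)-component unnecessary), and that you invoke the regular value theorem where the paper invokes the constant-rank theorem.
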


\begin{proof}
  Let us define the smooth map \(F\) from \(U\) to \(\mathbb{R}\) via
  \begin{equation}
    F(\beta, \alpha) =
      f_{\beta, \alpha}(\nu_{\beta,\alpha}) - f_{\beta, \alpha}(\mu_{\beta,\alpha}).
  \end{equation}
  We now want to apply the \emph{Constant-Rank Level Set Theorem}
  \parencite{Lee_2012} for \(F\) to conclude the proof.  The
  differential of \(F\) in terms of the \((\beta, u, v)\) coordinates
  is given via the row vector
  \begin{equation}
    \begin{pmatrix}
      -\frac{1}{2}(\|\nu_{\beta,\alpha}\|^2 - \|\mu_{\beta,\alpha}\|^2), &
      (\mu_{\beta,\alpha})_1 - (\nu_{\beta,\alpha})_1, &
      (\mu_{\beta,\alpha})_2 - (\nu_{\beta,\alpha})_2
    \end{pmatrix}
  \end{equation}
  which is the zero map if and only if
  \(\nu_{\beta, \alpha} = \mu_{\beta,\alpha}\).  Thus, the
  differential has constant rank one.
\end{proof}

\subsection{Coexistence in the regime of the rockets}

In the regime of the rockets (\(2 < \beta \le \frac{18}{7}\)) the only
cell that yields two local minimizers is given by the region enclosed
by the rockets.  The Maxwell set of this region is given by the
intersection with the axes of symmetry.  This is due to the fact that
the two local minimizers lie in different fundamental cells of the
simplex and that an asymmetry in the fields \(\alpha\) leads to the
same asymmetry in the global minimizer.  This is explained by the
following lemma which is inspired by Lemma~1 of \cite{wang94}.

\begin{lemma}[Tilting Lemma]
  \label{lem:tilting}
  Let \(\nu\) be a global minimum of \(f_{\beta,\alpha}\).  If
  \(\alpha_i > \alpha_j\), then \(\nu_i > \nu_j\).
\end{lemma}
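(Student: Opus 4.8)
The plan is to prove the lemma by an \emph{exchange} (or rearrangement) argument in the spirit of Lemma~1 of \cite{wang94}. Starting from a global minimizer \(\nu\), I compare its energy with that of the configuration \(\nu'\) obtained by interchanging the \(i\)-th and \(j\)-th components, i.e.\ \(\nu'_i = \nu_j\), \(\nu'_j = \nu_i\), and \(\nu'_k = \nu_k\) for \(k \ne i,j\). This \(\nu'\) again lies in \(\Delta^2\), so global minimality of \(\nu\) forces \(f_{\beta,\alpha}(\nu') \ge f_{\beta,\alpha}(\nu)\). The key structural observation is that the interaction term \(-\tfrac12\beta\langle\nu,\nu\rangle = -\tfrac12\beta\sum_k \nu_k^2\) is invariant under any permutation of the components of \(\nu\), so only the entropy term \(\sum_k \nu_k\log(\nu_k/\alpha_k)\) contributes to the energy difference.

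A short computation, in which everything cancels except the cross terms involving \(\alpha_i\) and \(\alpha_j\), then yields the clean identity
\begin{equation}
  f_{\beta,\alpha}(\nu') - f_{\beta,\alpha}(\nu) = (\nu_i - \nu_j)\log\frac{\alpha_i}{\alpha_j}.
\end{equation}
Combining this with \(f_{\beta,\alpha}(\nu') \ge f_{\beta,\alpha}(\nu)\) and the hypothesis \(\alpha_i > \alpha_j\), so that \(\log(\alpha_i/\alpha_j) > 0\), immediately gives the weak inequality \(\nu_i \ge \nu_j\).

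The main obstacle is upgrading this to a \emph{strict} inequality, since the exchange argument alone cannot exclude the equality case \(\nu_i = \nu_j\). To handle it I would first note that a global minimizer must lie in the \emph{interior} of \(\Delta^2\): on the boundary some component vanishes, and since \(\partial_{\nu_k}(\nu_k\log\nu_k) = \log\nu_k + 1 \to -\infty\) as \(\nu_k \to 0^+\), transferring mass into a vanishing coordinate strictly lowers \(f_{\beta,\alpha}\), so no boundary point can be a minimum. At an interior minimizer the stationarity condition holds and, by the derivation of the catastrophe map \eqref{eq:cata-map}, \(\alpha\) is proportional to \((\nu_k e^{-\beta\nu_k})_k\); hence \(\nu_i = \nu_j\) would force \(\alpha_i = \alpha_j\), contradicting \(\alpha_i > \alpha_j\). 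Equivalently, one may differentiate \(f_{\beta,\alpha}\) along the admissible direction transferring an infinitesimal amount of mass from coordinate \(j\) to coordinate \(i\); at \(\nu_i = \nu_j\) this directional derivative equals \(-\log(\alpha_i/\alpha_j) < 0\), again contradicting minimality. This excludes equality and promotes \(\nu_i \ge \nu_j\) to the desired \(\nu_i > \nu_j\).
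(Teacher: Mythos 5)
Your proof is correct and follows essentially the same route as the paper's: the exchange argument giving \(f_{\beta,\alpha}(\nu') - f_{\beta,\alpha}(\nu) = (\nu_i - \nu_j)\log(\alpha_i/\alpha_j) \ge 0\), followed by a first-order directional-derivative contradiction to rule out \(\nu_i = \nu_j\). Your explicit check that minimizers lie in the interior of \(\Delta^2\) (so the transfer direction is admissible) is a detail the paper leaves implicit, but it does not change the argument.
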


\begin{proof}
  Let \(\nu'_j = \nu_i\), \(\nu'_i = \nu_j\) and \(\nu'_k = \nu_k\)
  for \(k\) not in \(\{i,j\}\).  Since
  \begin{equation}
    f_{\beta,\alpha}(\nu') - f_{\beta,\alpha}(\nu) =
      (\nu_i - \nu_j)(\log\alpha_i - \log\alpha_j) \ge 0
  \end{equation}
  and \(\log\alpha_i - \log\alpha_j > 0\), we conclude
  \(\nu_i \ge \nu_j\).  Assume \(\nu_i = \nu_j\) and consider the
  push-forward of the tangent vector
  \(v = \frac{\partial}{\partial\nu_j} -
  \frac{\partial}{\partial\nu_i}\):
  \begin{equation}
    df_{\beta,\alpha}(v) = \log\alpha_i - \log\alpha_j > 0.
  \end{equation}
  Thus \(\nu\) is not a stationary point which contradicts the fact
  that \(\nu\) is a minimizer.  Therefore \(\nu_i > \nu_j\).
\end{proof}

The cusp point of the rockets is given by the end point of the curve
\(\chi_{\beta} \circ \gamma_{\beta}(\nu_1)\) which in this regime is
\(\nu_1 = 1 - \frac{2}{\beta}\) (see
Theorem~\ref{thm:const-temp-slices}).  Thus the Maxwell set in
\((x, y)\)-coordinates is the segment
\begin{equation}
  \{0\} \times \left(-\frac{1}{2},
    -\frac{1 - (\beta - 2)e^{3-\beta}}{2 + (\beta - 2)e^{3-\beta}} \right).
\end{equation}

\begin{figure}
  \centering
  \includegraphics{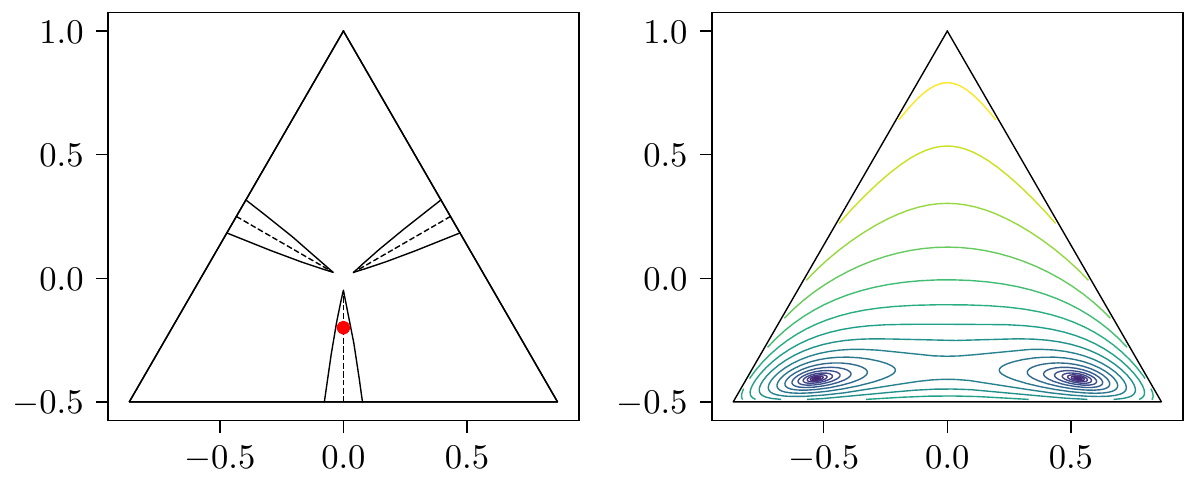}
  \caption{The left plot shows the Maxwell set (dashed lines) inside
    of the rockets.  The right plot shows the contours of a typical
    point on the Maxwell set as indicated by the red dot in the left
    plot.  The minimizers are equally deep and are mapped onto each
    other under reflection at the vertical axis.}
  \label{fig:maxwell-rockets}
\end{figure}

\subsection{Coexistence in the regime of disconnected pentagrams}
\label{sec:org9dafc7f}

In the regime \(\frac{18}{7} < \beta < \frac{8}{3}\) three pentagrams
have already unfolded but are still disconnected.  As we have already
discussed there are (modulo symmetry) three cells with two local
minimizers and one cell with three local minimizers.  The Maxwell set
in cell I (see Figure~\ref{fig:maxwell-pentagram}) is the easiest.
Here we have again two minimizers in different fundamental cells and
therefore the Maxwell set is the intersection with the respective axis
of symmetry.  In the cells III and IV we also have two local
minimizers but they lie in the same fundamental cell.  Therefore the
Tilting Lemma (Lemma~\ref{lem:tilting}) does not apply and the Maxwell
set is a curved line deviating from the axis of symmetry.  Cell II is
special because we have three different local minimizers two of which
lie in different fundamental cells.  Since the classical phase diagram
describes the degeneracy of the \emph{global} minimum, we know that
the Maxwell set continues on the axis of symmetry for as long as the
two local minima from the different fundamental cells are lower than
the third minimum.  There exists however a point on the axis of
symmetry at which this behaviour changes: the triple point.  This is a
point at which all minimizers are global minimizers.  Since two of the
local minimizers lie in different fundamental cells, the triple point
must lie on the “star” (Lemma \ref{lem:tilting}), that is, at least
two components are equal.  This is also the point where the Maxwell
set leaves the axis of symmetry because the minimizers involved do not
lie in different fundamental cells anymore.  It suffices to compute
the Maxwell set in either cell III or IV because of symmetry.  The
problem of computing the Maxwell set can be transformed into a
solution of an initial value problem where the initial value is given
by the triple point.

\begin{figure}
  \centering
  \includegraphics{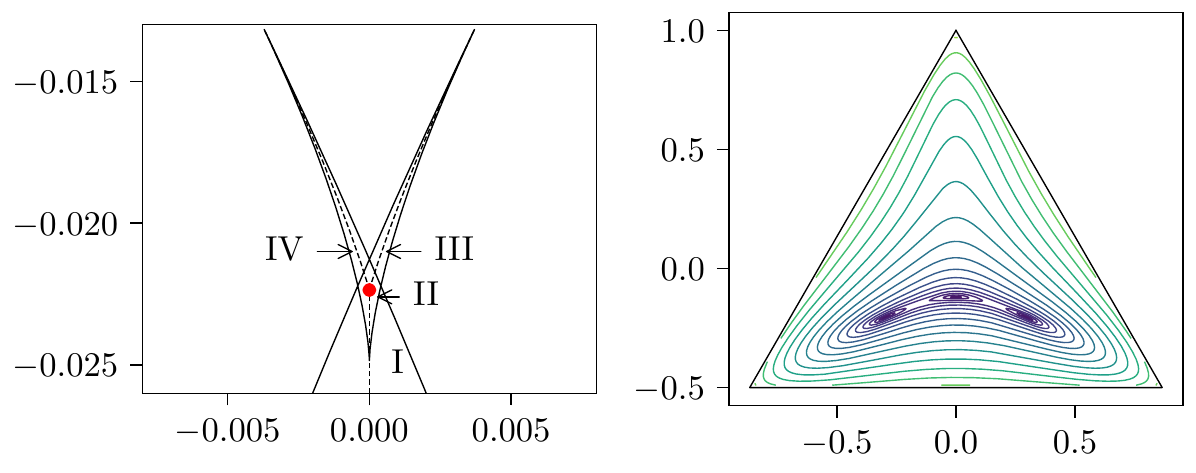}
  \caption{The left plot shows a magnification of one of the three
    pentagrams with its containing Maxwell set (dashed lines).  The
    red dot marks the triple point at which all three minimizers are
    global minimizers.  This can be seen in the right plot which shows
    a contour plot of the potential at the triple point.}
  \label{fig:maxwell-pentagram}
\end{figure}

\begin{proposition}
  For each positive \(\beta\) there exists exactly one point
  \(\alpha\) with \(\alpha_1 \le \alpha_2 \le \alpha_3\) such that
  \(f_{\beta, \alpha}\) has precisely three global minimizers.
\end{proposition}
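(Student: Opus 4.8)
The plan is to collapse the two-dimensional problem onto an axis of symmetry, and then to settle existence and uniqueness along that axis by a monotonicity argument driven by the envelope theorem.

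First I would \emph{localize every triple point to an axis of symmetry}. Suppose $f_{\beta,\alpha}$ has three global minimizers for some $\alpha$ with $\alpha_1 \le \alpha_2 \le \alpha_3$. If the three $\alpha_i$ were pairwise distinct, the Tilting Lemma (\cref{lem:tilting}) would force \emph{every} global minimizer into the single open fundamental cell $\{\nu_1 < \nu_2 < \nu_3\}$; but the cell-by-cell count of minima read off from the constant-temperature slice analysis (\cref{thm:const-temp-slices}) shows that three local minima only ever occur in the symmetric configuration, never inside one such cell. Hence two of the $\alpha_i$ coincide; say $\alpha_1 = \alpha_2$, so that $f_{\beta,\alpha}$ is invariant under the transposition $\tau$ swapping the first two coordinates. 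The set of global minimizers is then $\tau$-invariant of odd cardinality three, so one minimizer $\nu^{\mathrm{sym}}$ is fixed by $\tau$ (whence $\nu^{\mathrm{sym}}_1 = \nu^{\mathrm{sym}}_2$, on the axis), while the other two form a $\tau$-swapped pair $\nu^{\mathrm{pair}}, \tau\nu^{\mathrm{pair}}$ lying in the two fundamental cells adjacent to the axis. Every triple point therefore sits on the segment $\alpha = (a, a, 1-2a)$ with $a \in (0,\tfrac13)$.

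Next I would \emph{reduce to a one-dimensional comparison}. On this segment, define
\[
  \Phi(a) = f_{\beta,\alpha}\bigl(\nu^{\mathrm{sym}}(a)\bigr) - f_{\beta,\alpha}\bigl(\nu^{\mathrm{pair}}(a)\bigr),
\]
where $\nu^{\mathrm{sym}}(a)$ and $\nu^{\mathrm{pair}}(a)$ are the smooth branches of local minimizers furnished by the Implicit Function Theorem on the open interval $I_\beta$ of those $a$ for which all three critical points are simultaneously non-degenerate minima. A triple point is exactly an interior zero of $\Phi$ on $I_\beta$, both competitors then realizing the global minimum. Here the decisive step is short: since each branch is critical, the envelope theorem annihilates the $\nu$-variation, and using $\partial f_{\beta,\alpha}/\partial\alpha_i = -\nu_i/\alpha_i$ together with $\tfrac{d}{da}(a,a,1-2a) = (1,1,-2)$ I obtain $\tfrac{d}{da}f_{\beta,\alpha}(\nu(a)) = -(\nu_1+\nu_2)/a + 2\nu_3/(1-2a)$ on each branch. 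Writing $\Delta_3 = \nu^{\mathrm{sym}}_3 - \nu^{\mathrm{pair}}_3$ and using $\nu_1+\nu_2 = 1-\nu_3$, the common term $-1/a$ cancels in the difference and
\[
  \Phi'(a) = \Delta_3\left( \frac{1}{a} + \frac{2}{1-2a} \right).
\]
The bracket is strictly positive on $(0,\tfrac12)$, so $\Phi'$ has the sign of $\Delta_3$: the entire uniqueness question collapses to showing that $\Delta_3$ keeps a constant nonzero sign on $I_\beta$, i.e.\ that the symmetric minimizer and the tilted pair never exchange their $\nu_3$-order throughout the three-minimum window. Granting this, $\Phi$ is \emph{strictly monotone}, hence has at most one zero; checking that $\Phi$ takes opposite signs at the two endpoints of $I_\beta$—where the pair and the symmetric minimum respectively merge with the intervening saddles along the boundary of the symmetric cell, leaving the surviving competitor strictly lower—then yields exactly one zero by the intermediate value theorem. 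Finally, since the full set of triple points is $S_3$-symmetric, this zero is the unique triple point meeting the closed domain $\{\alpha_1 \le \alpha_2 \le \alpha_3\}$.

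The main obstacle is precisely the constant-sign statement for $\Delta_3$ together with the endpoint sign change: both require controlling the relative $\nu_3$-positions and the relative depths of $\nu^{\mathrm{sym}}$ and the off-axis pair across the whole coexistence interval, which is where the fold/pitchfork structure of the pentagram cell from \cref{thm:const-temp-slices} must be invoked. Everything downstream of that is the one-line envelope identity above.
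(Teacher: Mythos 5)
Your proposal follows the paper's own route: localize the triple point to a symmetry axis via the Tilting Lemma (\cref{lem:tilting}), then show that the depth difference between the symmetric minimizer and the swapped pair is strictly monotone along that axis by an envelope-theorem computation; the paper does this in the log-coordinates \eqref{eq:uv-coords}, where the derivative of the depth gap comes out as simply \(\mu_1(u) - \nu_1(u)\), the exact analogue of your \(\Delta_3\bigl(\frac{1}{a} + \frac{2}{1-2a}\bigr)\). The \enquote{main obstacle} you flag---that \(\Delta_3\) keeps a constant nonzero sign across the coexistence window---is precisely the step the paper dispatches silently by labelling the two branches so that \(\mu_1(u) > \nu_1(u)\) throughout (and the paper likewise leaves the endpoint sign change giving existence implicit), so your attempt matches the paper's proof in substance and, with that point granted, in level of rigor.
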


\begin{proof}
  It is clear that the triple point \(\alpha\) lies on the axis of
  symmetry.  Therefore, let the curve \(u \mapsto \alpha(u)\) be the
  axis of symmetry intersected with cell II.  Then we have two local
  minimizers \(\mu(u)\) and \(\nu(u)\) of \(f_{\beta, \alpha(u)}\)
  such that \(\mu_1(u) > \nu_1(u)\).  The difference
  \begin{equation}
    g(u) := f_{\beta, \alpha(u)}(\nu(u)) - f_{\beta, \alpha(u)}(\mu(u))
  \end{equation}
  is a monotonically increasing function of \(u\):
  \begin{equation}
    g'(u) = \frac{\partial f_{\beta,\alpha(u)}}{\partial u} = \mu_1(u) - \nu_1(u) > 0.
  \end{equation}
\end{proof}

\begin{proposition}
  The set of all \(\alpha\) such that the two local minimizers inside
  of the fundamental cell \(\nu_1 < \nu_2 < \nu_3\) have the same
  depth is given in \((u, v)\)-coordinates by the graph of
  \(u \mapsto v(u)\) which is the solution of the initial value
  problem
  \begin{equation}
    \label{eq:maxwell-ivp}
    \begin{split}
      \frac{dv}{du} &=
        -\frac{\nu_1(u, v) - \mu_1(u, v)}{\nu_2(u, v) - \mu_2(u, v)}
      \\
      v(u_0) &= v_0
    \end{split}
  \end{equation}
  where \(\alpha = (u_0, v_0)\) is the triple point, \(\nu(u, v)\) and
  \(\mu(u, v)\) are the two local minimizers of \(f_{\beta, \alpha}\)
  in the same cell for \(\alpha = (u, v(u))\).
\end{proposition}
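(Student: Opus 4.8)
The plan is to realize the Maxwell set inside the fundamental cell as a level set of the equal-depth function and to read off the ODE by implicit differentiation, using the differential of that function already computed for the submanifold lemma above. Fix \(\beta\) and work in the \((u,v)\)-coordinates. Let \(\nu(u,v)\) and \(\mu(u,v)\) be the two local minimizers of \(f_{\beta,\alpha}\) inside the cell \(\nu_1<\nu_2<\nu_3\); as in the discussion preceding the submanifold lemma, the Implicit Function Theorem makes these depend smoothly on \((u,v)\) as long as they remain non-degenerate. The equal-depth condition is \(F(u,v):=f_{\beta,\alpha}(\nu(u,v))-f_{\beta,\alpha}(\mu(u,v))=0\), and the Maxwell set inside the cell is exactly the zero set \(\{F=0\}\).

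Next I would quote the differential of \(F\) from the submanifold lemma: in \((\beta,u,v)\)-coordinates it is the row vector \(\bigl(-\frac{1}{2}(\|\nu\|^2-\|\mu\|^2),\ \mu_1-\nu_1,\ \mu_2-\nu_2\bigr)\), so at fixed \(\beta\) we have \(\partial_u F=\mu_1-\nu_1\) and \(\partial_v F=\mu_2-\nu_2\). Since the two minimizers are distinct, this differential never vanishes and \(\{F=0\}\) is a smooth embedded curve. Provided \(\partial_v F\neq 0\), the Implicit Function Theorem writes this curve as a graph \(u\mapsto v(u)\), and differentiating \(F(u,v(u))=0\) gives \(\frac{dv}{du}=-\partial_u F/\partial_v F=-\frac{\nu_1-\mu_1}{\nu_2-\mu_2}\), which is precisely the right-hand side of \eqref{eq:maxwell-ivp}. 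The right-hand side is smooth in \((u,v)\) wherever the minimizers are non-degenerate and \(\nu_2\neq\mu_2\), so the Picard--Lindel\"of theorem provides a unique maximal solution through any admissible initial point.

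For the initial condition I would take the triple point \((u_0,v_0)\). There the two minimizers \(\nu,\mu\) in the fundamental cell are both global and hence equally deep, so the triple point lies on \(\{F=0\}\); choosing it as the initial value selects exactly the branch of the Maxwell set that emanates off the axis of symmetry into the cell. Uniqueness of the IVP then identifies this whole connected branch with the solution graph, giving the asserted description.

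The main obstacle will be controlling the denominator \(\nu_2-\mu_2\): to write the Maxwell set as a graph over \(u\) and to keep the ODE regular one must check that \(\partial_v F=\mu_2-\nu_2\neq 0\) along the relevant portion of the curve, ruling out vertical tangencies. I expect this to follow from the geometry of the two minimizers in the cell---at the triple point the branch leaves the axis of symmetry transversally, and the two minimizers retain distinct second components as long as they persist as distinct non-degenerate critical points---but this regularity is the step that requires care. One also needs the minimizers to stay non-degenerate and inside the cell along the solution, so that \(\nu(u,v)\) and \(\mu(u,v)\) remain smooth; the graph is then defined until the cell boundary (where a degeneracy on the bifurcation set is reached) is hit.
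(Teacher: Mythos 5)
Your proposal is correct and follows essentially the same route as the paper: both differentiate the equal-depth condition \(f_{\beta,\alpha}(\nu(u,v(u)))=f_{\beta,\alpha}(\mu(u,v(u)))\) implicitly, exploiting stationarity of \(\nu\) and \(\mu\) (the envelope-theorem computation already encoded in the differential from the preceding submanifold lemma) to arrive at \(\frac{dv}{du}=-\frac{\nu_1-\mu_1}{\nu_2-\mu_2}\). Your added care about the non-vanishing of \(\partial_v F=\mu_2-\nu_2\) and the Picard--Lindel\"of uniqueness step makes explicit regularity points that the paper's terse proof leaves implicit, but does not change the argument.
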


\begin{proof}
  The curve fulfills
  \begin{equation}
    f_{\beta,\alpha(u)}(\nu(u, v)) = f_{\beta,\alpha(u)}(\mu(u, v)).
  \end{equation}
  Note that
  \begin{equation}
    \frac{d}{du} f_{\beta,\alpha(u)}(\nu(u, v)) = \nu_1(u, v) - \nu_2(u, v) \frac{dv}{du}
  \end{equation}
  since \(\nu(u, v)\) is a stationary point.  This proves the
  proposition.
\end{proof}

However, numerically computing the Maxwell set using this
characterization is difficult because the mapping
\(u \mapsto (\nu(u, v), \mu(u, v))\) is not explicit.  Therefore we
consider the system of equations
\begin{align*}
  f_{\beta,\alpha}(\mu) &= f_{\beta,\alpha}(\nu)\\
  \chi_{\beta}(\mu) &= \chi_{\beta}(\nu)
\end{align*}
instead.  Because the free energy at a stationary point is given by
\begin{equation}
  f_{\beta,\alpha}(\nu) = \sum_{i=1}^3 \left(
    \frac{1}{2}\beta \nu_i^2 + \nu_i \log \sum\limits_{j=1}^3 \nu_j e^{-\beta\nu_j}
  \right)
\end{equation}
this system does not depend on \(\alpha\) and can be solved
numerically for fixed values of \(\beta\) and \(\mu_1\).  For better
stability of the numerics we start with the triple point where the
minimizers are well separated and then iteratively use the results as
initial values for the next numerical step.  In this way, the
\cref{fig:maxwell-rockets,fig:maxwell-pentagram,fig:maxwell-beak-ew,fig:maxwell-beyond}
were obtained.

\subsection{From beak-to-beak to Ellis-Wang}

The qualitative nature of the Maxwell sets does not change with
increasing \(\beta\) after the beak-to-beak scenario until we reach
the Ellis-Wang point.  The cells with two minima which resulted from
the merging of the horns of the pentagrams contain two minimizers
which lie in the same fundamental cell as discussed in the previous
subsection.  Therefore the Maxwell set is again given by the solution
to the initial value problem~\eqref{eq:maxwell-ivp}.  This continues
even after the crossing temperature where the central cell now
contains four minima.  Before the Ellis-Wang temperature the central
fourth minimum is a local but not global minimum.  The two outer
minima each lie in the same fundamental cell so that the initial value
problem applies.  However, this changes after the Ellis-Wang point.

\begin{figure}[t!]
  \centering
  \includegraphics{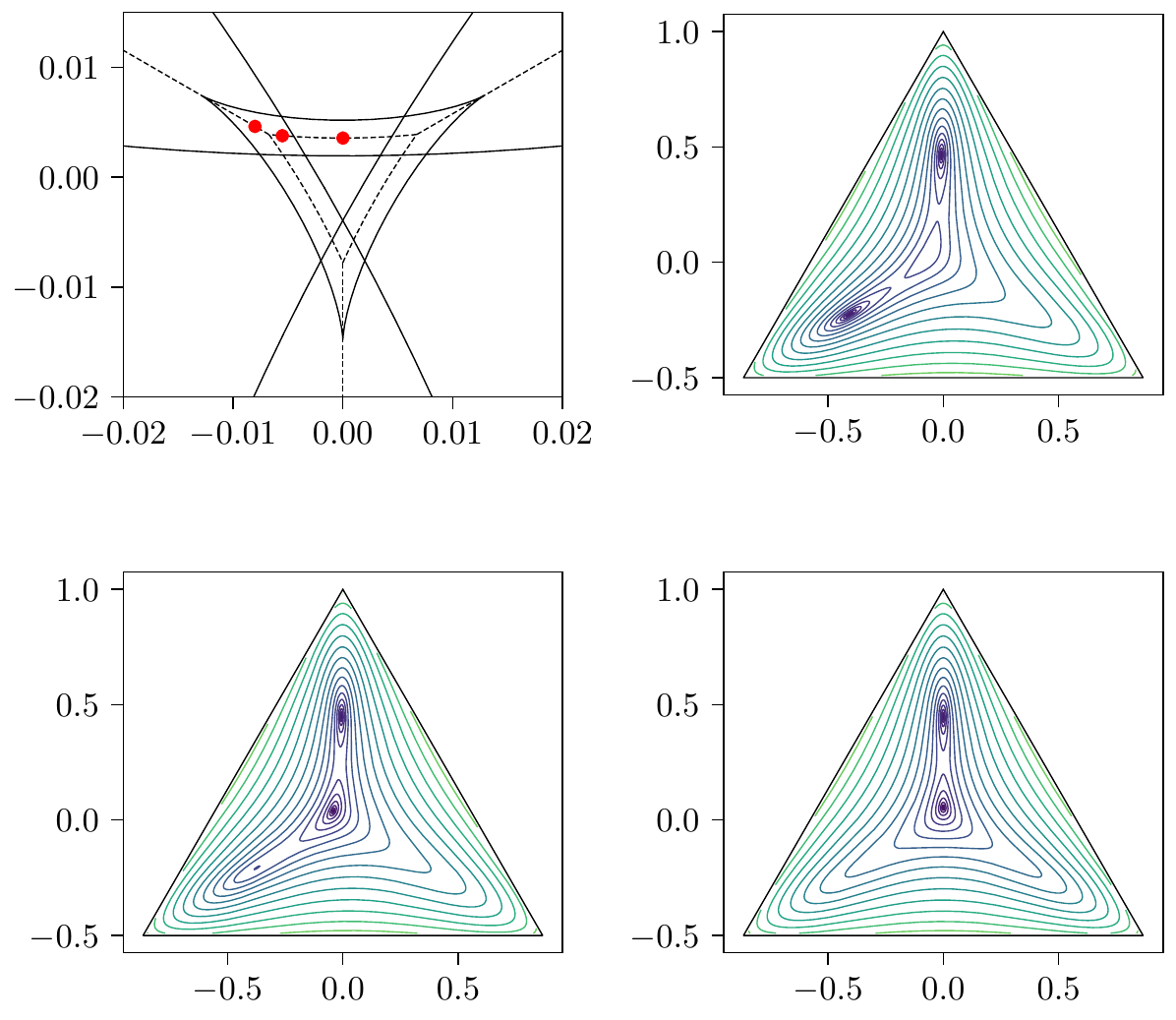}
  \caption{The upper left plot shows a magnification of the center of
    the simplex showing the bifurcation set together with the Maxwell
    set (dashed lines) for a \(\beta\) between beak-to-beak and
    Ellis-Wang.  Following the red dots on the Maxwell set from left
    to right we see how the potential changes in the upper right,
    lower left and lower right plots.}
  \label{fig:maxwell-beak-ew}
\end{figure}

\FloatBarrier

\subsection{Beyond Ellis-Wang}

At \(\beta = 4 \log2\) the outer minima and the central local minimum
are equally deep.  After the Ellis-Wang point (\(\beta > 4 \log 2\)),
the outer minima are lower than the central minimum.  They are equally
deep in zero magnetic field and it is still possible using the Tilting
Lemma to break the symmetry of the fields partially and achieve two
equally deep minimizers.  This can be done in the regime where the
rate function has a local minimum in the centre (\(\beta < 3\)) as
well as in the regime where the local minimum has become a local
maximum (\(\beta > 3\)).

\begin{figure}
  \centering
  \includegraphics{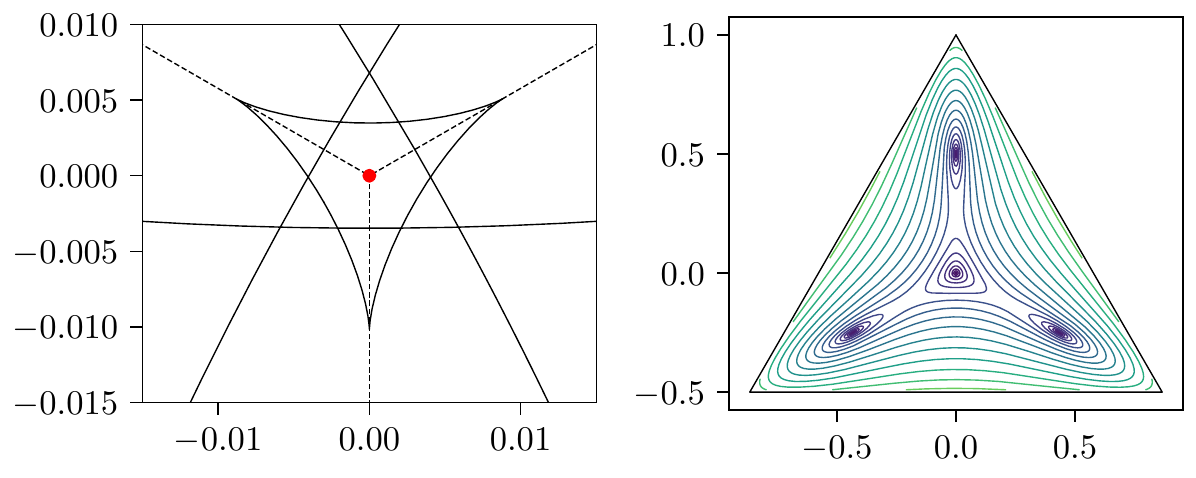}
  \caption{The left plot shows the Maxwell set (dashed lines) beyond
    Ellis-Wang together with the bifurcation set at the Ellis-Wang
    \(\beta = 4\log 2\).  The Maxwell set looks the same at lower
    temperatures.  The potential shows four equally deep minima for
    \(\alpha\) at the red dot.  Further decreasing the temperature,
    the central minimum will raise and eventually become a maximum.}
  \label{fig:maxwell-beyond}
\end{figure}

\FloatBarrier
\clearpage
\printbibliography
\end{document}